\def\arXiv#1{
  {\href{http://arxiv.org/pdf/#1}
   {{\mdseries\ttfamily arXiv:#1}}}}
\newcommand{\R}{\mathbb R}
\newcommand{\de}{\partial}
\newcommand{\imp}{\mathrm{Im}\,}
\newcommand{\vc}[1]{\boldsymbol #1}
\newcommand{\lap}{\Delta}
\newcommand{\dvg}{\operatorname{\mathrm{div}}}
\newcommand{\sym}{\operatorname{\mathrm{Sym}}}
\newcommand{\vt}[1]{\mathsf #1}
\newcommand{\bs}{\Sigma}
\newcommand{\rey}{\mathit{Re}}
\newcommand{\tsp}{\intercal}
\newcommand{\slet}{\boldsymbol{\zeta}}
\newcommand{\sletp}{p_{\boldsymbol{\zeta}}}
\newcommand{\tslet}{\vt Z}
\newcommand{\tsletc}{Z}
\newcommand{\ft}[1]{\widehat{#1}}
\newtheorem{thm}{Theorem}[section]
\newtheorem{lem}[thm]{Lemma}
\newtheorem{cor}[thm]{Corollary}
\newtheorem{prop}[thm]{Proposition}
\theoremstyle{definition}
\newtheorem{defn}[thm]{Definition}
\theoremstyle{definition}
\theoremstyle{remark}
\title[Free fall of one-dimensional bodies] 
{Steady free fall of one-dimensional bodies in a
  hyperviscous fluid at low Reynolds number}
\author[G. G. Giusteri, A. Marzocchi and A. Musesti]{}
\subjclass{Primary: 76D07; Secondary: 35Q35,35J91.}
\keywords{Slender-body theory, low-Reynolds-number flow, hyperviscosity, fluid-structure
  interaction, dimensional reduction}
 \email{giulio.giusteri@unicatt.it}
 \email{alfredo.marzocchi@unicatt.it}
 \email{alessandro.musesti@unicatt.it}
\begin{document}

\maketitle

\centerline{\scshape Giulio G. Giusteri, Alfredo Marzocchi and
  Alessandro Musesti }
\medskip
{\footnotesize
 \centerline{Dipartimento di Matematica e Fisica ``N. Tartaglia''}
   \centerline{Universit\`a Cattolica del Sacro Cuore}
   \centerline{Via dei Musei 41, I-25121 Brescia, Italy}
}
\bigskip


\begin{abstract}
  The paper is devoted to the study of the motion of one-dimensional
  rigid bodies during a free fall in a quasi-Newtonian hyperviscous
  fluid at low Reynolds number. We show the existence of a steady
  solution and furnish sufficient conditions on the geometry of the
  body in order to get purely translational motions. Such conditions
  are based on a generalized version of the so-called {\em Reciprocal
    Theorem} for fluids.
\end{abstract}

\section{Introduction}

The study of the free fall of slender bodies in liquids is an essential
issue in many problems of practical interest, such as the design of
composite materials or the analytical technique of separation of
macromolecules by electrophoresis (see \cite{Gal02} for a very
interesting and rich review on the subject). Typical experiments show
that homogeneous bodies satisfying some symmetry conditions, when
dropped in a quiescent viscous liquid, will eventually reach a steady
state that is purely translational, having the symmetry axis
forming an angle with respect to the gravity $\vc g$, called {\em tilt
  angle}, that depends on the material geometry of the body and on the
physical properties of the liquid.

If the geometry of the body is such that one of the dimensions
dramatically prevails on the other two, the assumption that the body
is one-dimensional is a reasonable simplification which can give
satisfactory results. However, a one-dimensional body is ``too thin''
to interact with a classical Newtonian incompressible fluid in 3D (it
has null capacity, see~\cite{Wei72}). Hence, we propose to study the
problem of the free fall of a slender body in a regularized model for
Newtonian fluids, introduced by Fried and Gurtin ~\cite{FriGur06} in
2006, where higher-order derivatives are considered in the
constitutive prescription of the Cauchy stress tensor.

The Navier--Stokes equation for incompressible fluids reads
\begin{equation*}
\rho\frac{\de\vc u}{\de t}+\rho(\vc u\cdot\nabla)\vc u+\nabla p-\mu\lap\vc u=\rho\vc b,
\end{equation*}
where $p$ is the pressure field, $\vc u$ is the divergence-free
velocity field, $\rho>0$ is the constant and homogeneous mass density,
$\mu>0$ is the dynamic viscosity and $\rho\vc b$ is a volumetric force
density. The {\em hyperviscous regularization} consists of adding a
term proportional to $\lap\lap\vc u$ to the equation. For this
modified equation,
\begin{equation*}
\rho\frac{\de\vc u}{\de t}+\rho(\vc u\cdot\nabla)\vc u+\nabla p-\mu\lap\vc u
+\zeta\lap\lap\vc u=\rho\vc b,
\end{equation*}
where $\zeta>0$ is the {\em hyperviscosity}, the existence and uniqueness of
regular solutions have been established. In a series of papers
\cite{FriGur06,Mus09,GiuMar11,Giu13} a purely mechanical explanation
of the hyperviscous term has been proposed, and different
contributions to $\zeta$ associated with dissipation functionals are
introduced and analyzed.  Here we assign to $\zeta$ a geometric role, by
introducing the \emph{effective thickness} $L>0$ of the
lower-dimensional objects, and setting $\zeta=\mu L^2$, so that the
hyperviscous flow equation becomes
\begin{equation*}
\rho\frac{\de\vc u}{\de t}+\rho(\vc u\cdot\nabla)\vc u+\nabla p-\mu\lap\left(\vc u
-L^2\lap\vc u\right)=\rho\vc b.
\end{equation*}
In the experiments on the free fall of rigid bodies in viscous fluids
the Reynolds number is often very small, so that the inertia of the
liquid can be neglected and one can linearize the flow equation
\cite{Wei72,Bre64}. However, even after that approximation the problem
does not become fully linear, since there remains a nonlinear coupling
between the flow and the rigid body motion.

In the present paper we will study the steady fall of a one-dimensional
rigid body in a hyperviscous fluid at low Reynolds number (for a
treatment of the full nonlinear unsteady problem we refer the reader
to~\cite{GiuMar13}). In Section~\ref{sec:formulation} we give the
mathematical formulation of the problem, and in Section~\ref{sec:drhr}
we study the forces acting on the body in the case of a hyperviscous
fluid. In Section~\ref{sec:freefall} we show the existence of steady
solutions, and in Section~\ref{sec:res} we prove the Reciprocal
Theorem in the case of a linearized hyperviscous liquid surrounding a
one-dimensional body, and study some properties of the so-called {\em
  resistance tensors}. Finally, Section~\ref{sec:sym} contains
sufficient conditions on the geometry of a homogeneous body in order
to get purely translational solutions.

\section{Formulation of the free fall problem}\label{sec:formulation}

The free fall problem is characterized by the fact that the rigid body
is immersed and dropped from rest in an otherwise quiescent fluid and
gravity is the only external force acting on the system.  We represent
a rigid body as a connected, bounded, closed subset $\bs$ of $\R^3$
which is a finite union of images of $[0,1]$ through 
$C^1$-diffeomorphisms.  It is convenient to write the problem in a
co-moving frame centered at the center of mass $\vc c(t)$ of
$\bs$.  Denoting by $\vc y$ the position of a point in the original
inertial frame, and by $\vc x$ its position in the co-moving frame, we
know that, at any time $t\geq 0$,
\[
\vc x=\vt Q(t)(\vc y-\vc c(t)),
\]
where $\vt Q(t)$ is an orthogonal linear transformation for any $t\geq 0$, with $\vt Q(0)=\vt 1$.
If the velocity of the center of mass and the spin of the rigid body in the inertial frame are denoted by $\vc\eta(t)$ and $\vc\varOmega(t)$, respectively, so that
\[
\vc v(t)=\vc\eta(t)+\vc\varOmega(t)\times(\bar{\vc y}-\vc c(t))
\]
is the velocity, in that frame, of any point $\bar{\vc y}$ belonging to the rigid body, then their expression in the co-moving frame is given by
\[
\vc\xi(t):=\vt Q(t)\vc\eta(t)\quad\text{and}\quad\vc\omega(t):=\vt Q(t)\vc\varOmega(t),
\]
respectively, and the rigid velocity field $\vc v$ is transformed into
\[
\vc U(t):=\vc\xi(t)+\vc\omega(t)\times\bar{\vc x}, 
\]
where $\bar{\vc x}$ denotes the coordinates of $\bar{\vc y}$ in the co-moving frame.

The gravitational acceleration vector $\vc g$ (constant in the
inertial frame) is represented, in the co-moving frame, by $\vc
G(t):=\vt Q(t)\vc g$, which is easily seen to satisfy the
ordinary differential equation
\begin{equation}\label{eq:G}
\frac{d\vc G}{dt}=\vc G\times\vc\omega.
\end{equation}

As customary when studying flows past rigid bodies, the velocity field
$\vc u$ that we consider is the so called \emph{disturbance field},
which is the difference between the actual flow and the flow at
infinity, both seen in the co-moving frame. Since the flow at infinity
is $-\vc U$ (that is minus the extension to all of the fluid of the
motion of the immersed object), the representation of the fluid flow
in the co-moving frame is given by $\vc u-\vc U$.

The continuity and flow equations for an incompressible (disturbance)
velocity field $\vc u(\vc x,t)$ and pressure field $p(\vc x,t)$
defined on $(\R^3\setminus\bs)\times[0,+\infty)$, become then
\begin{gather}\label{eq:incomp}
\dvg\vc u=0,\\
\rho\left(\frac{\de\vc u}{\de t}+[(\vc u-\vc U)\cdot\nabla]\vc u+\vc\omega\times\vc u\right)=\dvg\vt T(\vc u,p)+\rho\vc G,
\end{gather}
where $\vt T(\vc u,p)$ denotes the Cauchy stress tensor.  Notice that,
thanks to its frame indifference properties, $\vt T$ retains the same
functional dependence on the velocity field seen both in the inertial
frame and in the co-moving one.

The disturbance field $\vc u$ satisfies also the decay condition
\begin{equation}\label{eq:decay}
\lim_{|\vc x|\to\infty}\vc u(\vc x,t)=\vc 0,
\end{equation}
and the adherence to the rigid body, given by
\begin{equation}\label{eq:adherence}
\vc u(\vc x,t)=\vc U(\vc x,t)\qquad\text{on }\bs\times[0,+\infty).
\end{equation}

To properly account for Archimedean forces we introduce the {\em
  effective mass} of the body as given by
\[
m_e=m-m_c,
\]
that is, the difference between the real mass of the object and the
{\em complementary mass} $m_c$ of a portion of fluid occupying the
real volume of the object. Even if for a one-dimensional body the
complementary mass $m_c$ should vanish, in view of the interpretation
of such bodies as representations for real
three-dimensional objects we allow for any value $0\leq m_c\leq m$,
suggested by the physical properties of the interaction between the
body and the fluid. Then the equations of motion for the rigid body in
the co-moving frame read
\begin{gather}\label{eq:linmom}
m\frac{d\vc\xi}{dt}+m\vc\omega\times\vc\xi=m_e\vc G+\vc f(\vc u,p),\\
\label{eq:angmom}
\vt J\frac{d\vc\omega}{dt}+\vc\omega\times(\vt J\vc\omega)=
-m_c\vc r\times\vc G+\vc t(\vc u,p),
\end{gather}
where $\vt J$ is the inertia tensor of $\bs$, $\vc r$ is the position
of the centroid\footnote{Notice that the centroid of a rigid body
  coincides with its center of mass when the body has a uniform mass
  density; in the latter case $\vc r=\vc 0$.} in the co-moving frame,
and $\vc f$ and $\vc t$ are the total hydrodynamic force and torque,
respectively, exerted on the rigid body as a consequence of the fluid
flow $\vc u$ and pressure $p$. The proper definitions of $\vc f$ and
$\vc t$, as well as the physical properties of the fluid
encoded in the Cauchy stress tensor $\vt T$, are discussed in
Section~\ref{sec:drhr}.

The whole set of equations \eqref{eq:G}--\eqref{eq:angmom} represents
the differential problem associated with the free fall of a rigid
object $\bs$ in an incompressible fluid. It is convenient to consider
it in the non-dimensional form: by choosing suitable reference length
$d$, proportional to the diameter of $\bs$, and speed $W=\rho
  g d^2/\mu$, we can switch to non-dimensional quantities
according to
\begin{gather*}
\vc x\to\frac{\vc x}{d}\;,\quad t\to\frac{t\mu}{\rho d^2}\;,\quad\vc
u\to\frac{\vc u}{W}\;,\quad \vc \xi\to\frac{\vc \xi}{W}\;,\quad
\vc\omega\to\frac{\vc \omega d}{W}\;,\\
p\to\frac{pd}{\mu W}\;,\quad
m\to\frac{m}{\rho d^3}\;,\quad\vc G\to \frac{\vc G}{g}\;,\quad\vc g\to \frac{\vc g}{g}\;,
\end{gather*}
obtaining
\begin{align}
&\dvg\vc u=0,\label{eq:nd1}\\
&\frac{\de\vc u}{\de t}+\rey\left\{[(\vc u-\vc U)\cdot\nabla]\vc u+\vc\omega\times\vc u\right\}=\dvg\vt T(\vc u,p)+\vc G,\label{eq:nd2}\\
&\lim_{|\vc x|\to\infty}\vc u(\vc x,t)=\vc 0,\\
&\vc u(\vc x,t)=\vc U(\vc x,t)\qquad\text{on }\bs\times[0,+\infty),\label{eq:nd4}\\
&m\frac{d\vc\xi}{dt}+\rey(m\vc\omega\times\vc\xi)=
m_e\vc G+\vc f(\vc u,p),\label{eq:nd5}\\
&\vt J\frac{d\vc\omega}{dt}+\rey[\vc\omega\times(\vt J\vc\omega)]=
-m_c\vc r\times\vc G+\vc t(\vc u,p),\label{eq:nd6}\\
&\frac{d\vc G}{dt}=\rey(\vc G\times\vc\omega),\label{eq:nd7}
\end{align}
with initial conditions
\begin{equation}\label{eq:ic}
\vc u(\vc x,0)=\vc\xi(0)=\vc\omega(0)=\vc 0,\quad\vc G(0)=\vc g,
\end{equation}
where $\rey=\rho Wd/\mu=\rho^2gd^3/\mu^2$ is the Reynolds number and every quantity has
to be understood as non-dimensional.

The low-Reynolds-number approximation of the differential problem,
which is also a linearization of the equation for the flow, is
obtained by neglecting the terms proportional to $\rey$ in
equations~\eqref{eq:nd2}, \eqref{eq:nd5}, and \eqref{eq:nd6}. When
considering free fall problems, the energy budget is determined by
gravitational forces and viscous dissipation; hence the limit $\rey\to
0$ corresponds to the situation where the latter prevails. In the
meanwhile, the geometric parameters $d$ and $L$ do not need to be small,
even in that limit.

Notice that equation \eqref{eq:nd7} remains unchanged, since it
represents a geometric constraint which holds for any non-vanishing
value of $\rey$. Moreover, the steady version of the problem is
achieved by assuming that all the quantities do not depend on time,
hence neglecting all the time derivatives. In that case, keeping into
account also the initial conditions~\eqref{eq:ic},
equation~\eqref{eq:nd2} becomes
\begin{equation}\label{eq:motion}
\dvg\vt T(\vc u,p)+\vc g=0\quad\text{on $\R^3\setminus\bs$},
\end{equation}
and equation~\eqref{eq:nd4} writes
\[
\vc u(\vc x)=\vc U(\vc x) = \vc\xi+\vc\omega\times\vc x\quad\text{on
}\bs.
\]
Finally, equations~\eqref{eq:nd5}--\eqref{eq:nd7} become
\[
m_e\vc g+\vc f(\vc u,p)=0,\quad -m_c\vc r\times\vc g+\vc t(\vc u,p)=0,
\quad \vc g\times\vc\omega=\vc 0,
\]
respectively.

\section{The viscous force acting on a slender
  body}\label{sec:drhr}

Given $r>0$, we introduce the (closed) $r$-neighborhood of the slender
body $\bs$ by setting
\[
V_r(\bs):=\left\{\vc x\in\R^3 : \inf_{\vc c\in \bs}|\vc x - \vc c|\leq r \right\}.
\]
Then we define the total hydrodynamic force, due to the fluid velocity
and pressure field $(\vc u,p)$, acting on $\bs$ as
\begin{equation}\label{eq:defforce}
\vc f(\vc u,p):=\lim_{r\to 0}\int_{\de V_r(\bs)}\vt T(\vc u,p)\vc n,
\end{equation}
where $\vc n$ denotes the unit outer normal to $\de V_r(\bs)$. Notice
that, thanks to the regularity of $\bs$, the $r$-neighborhood
$V_r(\bs)$ has a Lipschitz boundary for any $r$ sufficiently small.
\begin{prop}
The limit in~\eqref{eq:defforce} is well-defined.
\end{prop}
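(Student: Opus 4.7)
The plan is to show that the map $r \mapsto \int_{\partial V_r(\bs)} \vt T(\vc u,p)\vc n$ satisfies a Cauchy condition as $r\to 0^+$ by comparing two such integrals via the divergence theorem on the annular shell between them.

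First, I would fix $r_0>0$ small enough that $V_{r}(\bs)$ has Lipschitz boundary for every $r\in(0,r_0]$ (this is available thanks to the regularity of $\bs$ noted right after~\eqref{eq:defforce}). For any $0<r_1<r_2\le r_0$, the shell $A_{r_1,r_2}:=V_{r_2}(\bs)\setminus\operatorname{int}V_{r_1}(\bs)$ is a bounded Lipschitz domain contained in $\R^3\setminus\bs$, where the flow equation holds, so that $\dvg\vt T(\vc u,p)$ is well defined and locally integrable (indeed, from~\eqref{eq:motion} in the steady case, or~\eqref{eq:nd2} in general, $\dvg\vt T$ is expressible in terms of $\vc u$, $\vc G$ and lower-order quantities that are bounded near $\bs$). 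Applying the divergence theorem on $A_{r_1,r_2}$ with outer unit normal $\vc n$ on both boundary components gives
\[
\int_{\de V_{r_2}(\bs)}\vt T(\vc u,p)\vc n-\int_{\de V_{r_1}(\bs)}\vt T(\vc u,p)\vc n=\int_{A_{r_1,r_2}}\dvg\vt T(\vc u,p)\dln.
\]

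Next, I would estimate the right-hand side using that $\bs$ is one-dimensional. Since $\bs$ is a finite union of $C^1$-images of $[0,1]$, it has finite one-dimensional Hausdorff measure and $\leb^3(V_r(\bs))\to 0$ as $r\to 0^+$ (standard tube estimates give in fact $\leb^3(V_r(\bs))=O(r^2)$). Combining this with the local integrability of $\dvg\vt T(\vc u,p)$ on $V_{r_0}(\bs)\setminus\bs$, we obtain
\[
\Bigl|\int_{A_{r_1,r_2}}\dvg\vt T(\vc u,p)\dln\Bigr|\le \int_{V_{r_2}(\bs)\setminus\bs}\bigl|\dvg\vt T(\vc u,p)\bigr|\dln\longrightarrow 0
\]
as $r_2\to 0^+$, so the family of boundary integrals is Cauchy and hence convergent.

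The main obstacle is the justification that $\dvg\vt T(\vc u,p)$ is indeed integrable on a punctured neighborhood of $\bs$: in the hyperviscous setting $\vt T$ contains third-order derivatives of $\vc u$, so a priori $\dvg\vt T$ involves fourth-order derivatives, which could blow up near the one-dimensional singular set $\bs$. The point is that one should not estimate these derivatives separately, but rather use the flow equation to replace $\dvg\vt T$ by the remaining (much more regular) terms: in the steady low-Reynolds-number regime this simply means $\dvg\vt T=-\vc g$, which is bounded, while in the general case the substitution from~\eqref{eq:nd2} yields quantities controlled by $\vc u$, $\vc U$ and $\vc G$ that are locally bounded near $\bs$ under the standing regularity assumptions on the fields. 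Once this reduction is in place, the tube estimate closes the argument.
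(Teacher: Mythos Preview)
Your argument is correct and follows essentially the same line as the paper's: both rely on the divergence theorem together with the observation that, by~\eqref{eq:motion}, $\dvg\vt T$ coincides on $\R^3\setminus\bs$ with the integrable (indeed bounded) function $-\vc g$, so the volume contribution near $\bs$ vanishes in the limit. The only cosmetic difference is that the paper integrates over $B_R\setminus V_r(\bs)$ for a fixed outer ball $B_R$ and thereby obtains an explicit formula $\vc f=\int_{\de B_R}\vt T\vc n-\int_{B_R\setminus\bs}\dvg\vt T$, whereas you compare two values of $r$ on a shell and verify a Cauchy condition.
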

\begin{proof}
  We consider a ball $B_R$ centered at the origin and with radius $R$,
  which contains $V_r(\bs)$ for some $r>0$. According to
  equation~\eqref{eq:motion}, the term $\dvg\vt T$ balances the
  gravity, so that it is represented by a measure whose singular part
  is concentrated on $\bs$ (we can see this by noting that the mass
  density per unit volume must diverge on $\bs$ to give a non-zero
  weight to a body with vanishing volume). Denote by $\lambda_{ac}$ and
  $\lambda_s$ the absolutely continuous and singular parts of the
  measure $\dvg\vt T$, respectively. It follows that the support of
  $\lambda_s$ is contained in $\bs$ and that $\dvg\vt T=\lambda_{ac}$ in
  $\R^3\setminus\bs$. Then we have, by applying
  Lebesgue's theorem,
\begin{equation}\label{eq:compl}
\lim_{r\to 0}\int_{B_R\setminus V_r(\bs)}\dvg\vt T=\lim_{r\to 0}\int_{B_R\setminus V_r(\bs)}\lambda_{ac}=\int_{B_R\setminus\bs}\lambda_{ac}=\int_{B_R\setminus\bs}\dvg\vt T,
\end{equation}
since $\lambda_{ac}\in L^1(\R^3;\R^3)$. Then, by the Divergence theorem,
\begin{equation}\label{eq:limf}
\vc f=\lim_{r\to 0}\int_{\de V_r(\bs)}\vt T\vc n=\int_{\de B_R}\vt T\vc n-\int_{B_R\setminus\bs}\dvg\vt T,
\end{equation}
where $\vc n$ is always the outer normal. Since the right-hand side
of~\eqref{eq:limf} is independent of $r$, the left-hand side is well
defined. 
\end{proof}
In a similar fashion, we define the total hydrodynamic torque acting
on $\bs$, due to the fluid velocity and pressure field $(\vc u,p)$, as
\begin{equation*}
\vc t(\vc u,p):=\lim_{r\to 0}\int_{\de V_r(\bs)}\vc x\times\vt T(\vc u,p)\vc n.
\end{equation*}
It is important to stress the fact that, if $\lambda_s$ were absent,
{\em i.e.}\ if $\dvg \vt T$ were an $L^1$-function, then the integral
over $B_R\setminus\bs$ of $\dvg\vt T$ in~\eqref{eq:limf} would be
equal to its integral over all of $B_R$ and $\vc f$ and $\vc t$ would
simply vanish. 

\null

The constitutive theory for non-simple fluids leading to a
hyperviscous flow equation has been developed
in~\cite{FriGur06,Mus09,GiuMar11}. It offers a number of possible
choices for the terms to be included in $\vt T$, in addition to those
of Newtonian fluids. Here we make the following, somewhat minimal, choice:
\begin{equation}\label{eq:defT}
\vt T(\vc u,p):=-p\vt 1+\left(\nabla\vc u+\nabla\vc u^\tsp-\ell^2\nabla\lap\vc u\right).
\end{equation}
In this way we obtain a fluid which is quasi-Newtonian, while being
able to adhere to lower-dimensional objects. 

The only new parameter $\ell$ is given by $L/d$, hence it is
non-dimensional and strictly positive.  As shown
in~\cite[Sec.~2.1]{GiuFri12}, the drag force exerted on a point
particle moving in a hyperviscous fluid with hyperviscosity $\zeta=\mu
L^2$ is identical to the drag force exerted on a sphere of radius $L$
uniformly translating in a Newtonian fluid.
In~\cite[Sec.~3]{GiuMar10}, a similar result has been found for a pipe
flow driven by the motion of an inner cylinder with vanishing radius.
On the basis of those results, we can assign to $L$, and consequently
to $\ell$, the geometric meaning of an effective thickness of the
slender body $\bs$.

It is straightforward to check that $\vt T$, as defined
in~\eqref{eq:defT}, enjoys the standard symmetry and frame
indifference properties and satisfies a dissipation inequality. In
particular, one has
\begin{equation*}
\dvg \vt T = -\nabla p + \Delta \vc u -\ell^2\Delta\Delta \vc u.
\end{equation*}

Now we briefly discuss the functional setting. In view of the natural
variational formulation of the problem, we introduce the space
\[
C:=\{\vc u\in C^\infty_0(\R^3;\R^3):\ \dvg\vc u=0\}
\]
endowed with the norm
\[
\|\vc u\|^2:= \int_{\R^3}\left(2|\sym\nabla\vc
  u|^2+\ell^2|\lap\vc u|^2\right),
\]
where $\sym\nabla\vc u:=(\nabla\vc u+\nabla\vc u^\tsp)/2$.  Denote
with $\mathcal{X}$ the completion of $C$ in that norm; it is easy to
see that if $\vc u\in \mathcal{X}$, then $\nabla\vc u$ belongs to the
Sobolev space $W^{1,2}(\R^3;\R^9)$. Moreover, by the Sobolev Embedding
Theorem, $\mathcal{X}$ embeds in $L^q(\R^3;\R^3)$ for every $6\leq
q\leq \infty$ and also in a space of H\"older-continuous functions.
Regarding the pressure $p$ as the Lagrange multiplier of the
constraint $\dvg\vc u=0$, we will take it in the dual Sobolev space
$W^{-1,2}(\R^3)$.

We summarize the problem of the steady free fall of a one-dimensional
body $\bs$ at low Reynolds number, as the following: find $(\vc
u,p)\in \mathcal{X}\times W^{-1,2}(\R^3)$ and
$\vc\xi,\vc\omega,\vc g\in\R^3$ with $|\vc g|=1$, such that
\begin{align}
& \nabla p - \Delta \vc u +\ell^2\Delta\Delta \vc u = \vc g
\quad\text{on $\R^3\setminus \bs$},\label{eq:st2}\\
&\vc u(\vc x)=\vc\xi+\vc\omega\times\vc x\quad\text{on $\bs$},\label{eq:st3}\\
&m_e\vc g = -\vc f,\label{eq:st4}\\
&m_c\vc r\times\vc g = \vc t,\label{eq:st5}\\
&\vc g\times\vc\omega=\vc 0.\label{eq:st6}
\end{align}
The constraint $\dvg\vc u=0$ is encoded in the definition of the space
$\mathcal{X}$, while the strong decay condition~\eqref{eq:decay} is replaced by
an integrability condition for $\vc u$ on the whole $\R^3$. Notice
that, although equation~\eqref{eq:st2} is linear, the full problem is
nonlinear.

\section{Steady free fall at low Reynolds number}
\label{sec:freefall}

In this section we prove the existence of a solution for the the
differential problem~\eqref{eq:st2}--\eqref{eq:st6}. We begin by
introducing some auxiliary problems, which are well-posed by virtue of
the following result.

\begin{lem}\label{lem:aux}
Given $\vc\xi,\vc\omega\in\R^3$, there exists a unique
solution $(\vc h,p)\in \mathcal{X}\times W^{-1,2}(\R^3)$ of the problem
\begin{equation}\label{eq:aux}
\begin{cases}
\nabla p- \Delta \vc h +\ell^2\Delta\Delta \vc h=0 &\text{in
  $\R^3\setminus \bs$,}\\
\vc h=\vc\xi+\vc\omega\times\vc x &\text{on $\bs$.}
\end{cases}
\end{equation}
Moreover $\vc h\in W^{3,q}_{loc}(\R^3;\R^3)$ for any $1<q<\frac 3 2$.
\end{lem}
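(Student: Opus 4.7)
My plan is to reformulate \eqref{eq:aux} as a constrained energy minimization on the Hilbert space $\mathcal{X}$ and to recover the pressure as a Lagrange multiplier via de Rham's theorem. The bilinear form
\[
a(\vc u,\vc v):=\int_{\R^3}\bigl(2\,\sym\nabla\vc u\!:\!\sym\nabla\vc v+\ell^2\,\lap\vc u\cdot\lap\vc v\bigr)\,d\vc x
\]
is the inner product of $\mathcal{X}$, hence continuous and coercive. Because $\mathcal{X}$ embeds into H\"older-continuous functions, the trace $\vc v\mapsto\vc v|_\bs$ is well defined and continuous, so
\[
\mathcal{X}_\bs:=\{\vc v\in\mathcal{X}:\vc v=\vc 0\text{ on }\bs\}
\]
is a closed linear subspace.

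I would first construct an admissible extension $\vc h_\ast\in\mathcal{X}$ with trace $\vc\xi+\vc\omega\times\vc x$ on $\bs$. Choosing $\phi\in C^\infty_c(\R^3)$ equal to $1$ in a neighborhood of $\bs$ and setting $\vc V:=\phi(\vc\xi+\vc\omega\times\vc x)$, one obtains a smooth, compactly supported field with the correct trace but nonzero divergence $\nabla\phi\cdot(\vc\xi+\vc\omega\times\vc x)$. This divergence is supported in $\{\nabla\phi\ne\vc 0\}$, which can be arranged disjoint from $\bs$, and has vanishing mean, so a standard Bogovski\u{\i}-type right-inverse of the divergence yields a smooth, compactly supported correction $\vc W$ supported away from $\bs$ with $\dvg\vc W=\dvg\vc V$. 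Then $\vc h_\ast:=\vc V-\vc W$ lies in $\mathcal{X}$. Minimizing $\vc v\mapsto\tfrac12\,a(\vc h_\ast+\vc v,\vc h_\ast+\vc v)$ over $\vc v\in\mathcal{X}_\bs$ yields, by strict convexity and coercivity, a unique minimizer $\tilde{\vc h}$, and $\vc h:=\vc h_\ast+\tilde{\vc h}$ satisfies $a(\vc h,\vc\varphi)=0$ for every $\vc\varphi\in\mathcal{X}_\bs$. Uniqueness of $\vc h$ is immediate: the difference of any two solutions lies in $\mathcal{X}_\bs$ and, tested against itself, has zero $\mathcal{X}$-norm.

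Specializing the Euler--Lagrange identity to $\vc\varphi\in C^\infty_c(\R^3\setminus\bs)$ with $\dvg\vc\varphi=0$ and integrating by parts (using $\dvg\vc h=0$) transforms it into $\langle-\lap\vc h+\ell^2\lap\lap\vc h,\vc\varphi\rangle=0$; de Rham's theorem then furnishes a distribution $p$ on the connected open set $\R^3\setminus\bs$ with $\nabla p=\lap\vc h-\ell^2\lap\lap\vc h$, and the bound $p\in W^{-1,2}(\R^3)$ follows by the standard duality argument against Bogovski\u{\i} solutions of the divergence equation. For the local regularity, away from $\bs$ the equation is a fourth-order elliptic system with constant coefficients and zero right-hand side, so $\vc h\in C^\infty(\R^3\setminus\bs)$ by interior estimates. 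Near $\bs$ I would represent $\vc h$ by convolution of the Green's tensor of the hyperviscous Stokes-like operator against the singular force density concentrated on $\bs$ that produces the prescribed trace. The scalar fundamental solution $G(\vc x)=(1-e^{-|\vc x|/\ell})/(4\pi|\vc x|)$ of $-\lap+\ell^2\lap\lap$ is bounded at the origin with second and third derivatives of order $|\vc x|^{-1}$ and $|\vc x|^{-2}$; convolving against a one-dimensional force density on $\bs$ and estimating in tubular coordinates yields the $W^{3,q}_{loc}$ membership in the range $1<q<3/2$. The main obstacle is this near-$\bs$ singularity analysis, where the pressure contribution and the endpoint behavior of the arcs comprising $\bs$ must be carefully tracked to reach the claimed integrability threshold.
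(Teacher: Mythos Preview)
Your existence and uniqueness argument is essentially the paper's: both minimize $\tfrac12\|\cdot\|_{\mathcal X}^2$ over the closed affine set $\{\vc v\in\mathcal X:\vc v=\vc\xi+\vc\omega\times\vc x\text{ on }\bs\}$, relying on the H\"older embedding of $\mathcal X$ to make sense of the trace constraint, and recover $p$ as a Lagrange multiplier. Your explicit Bogovski\u{\i} construction of a divergence-free lifting is a welcome detail that the paper simply takes for granted. The regularity step, however, is handled differently. You propose representing $\vc h$ near $\bs$ as a layer potential and extracting $W^{3,q}_{loc}$ from pointwise bounds on the hyperviscous Oseen tensor and its derivatives---feasible, but (as you yourself note) technically delicate because of the pressure contribution and the arc endpoints. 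The paper instead replaces the adherence condition by a vector Radon measure $\vc\eta$ supported on $\bs$, so that $\nabla p-\lap\vc h+\ell^2\lap\lap\vc h=\vc\eta$ holds in all of $\R^3$; since Radon measures embed into $W^{-1,q}_{loc}$ precisely for $q'>3$, i.e.\ $1<q<\tfrac32$, the Agmon--Douglis--Nirenberg interior regularity for the fourth-order elliptic system immediately yields $\vc h\in W^{3,q}_{loc}$. This route makes the threshold $\tfrac32$ transparent---it is exactly the Sobolev--measure embedding exponent in dimension three---and bypasses the near-$\bs$ singularity analysis you flag as the main obstacle.
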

\begin{proof}
  Since $\mathcal{X}$ embeds in a space of H\"older-continuous functions, the
  subset
\[
\{\vc v\in \mathcal{X}:\ \vc v=\vc\xi+\vc\omega\times\vc x\text{ on }\bs\}
\]
is well-defined, closed and convex. The velocity field $\vc h$ can be
found by minimizing on that set the functional
\[
\mathcal F(\vc v):=\frac{1}{2}\|\vc v\|^2=\frac 1 2 \int_{\R^3}\left(2|\sym\nabla\vc
  v|^2+\ell^2|\lap\vc v|^2\right).
\]
Being $\mathcal F$ a strictly convex functional, $\vc h$ is
unique. Then, the pressure field $p$ can be recovered as the Lagrange
multiplier of the divergence-free constraint.

Since the adherence condition on $\bs$ can be replaced by a non
homogeneous right-hand side which is a measure supported on $\bs$,
that is
\[
\nabla p- \Delta \vc h +\ell^2\Delta\Delta \vc h=\vc \eta \quad\text{in $\R^3$,}\\
\]
where $\vc\eta$ is a (vector-valued) Radon measure which vanishes
outside $\bs$, we get for $\vc h$ a fourth-order linear elliptic
equation with a measure-valued datum. The space of
Radon measures embeds in $W^{-1,q}_{loc}(\R^3;\R^3)$ for
every $1<q<\frac 3 2$ (here $\frac 3 2$ is such that $q'>n$ in the
case $n=3$), hence a standard regularity gain of the solution
\cite[Theorem $15.3'$]{AgmDou59} entails $\vc h\in
W^{3,q}_{loc}(\R^3;\R^3)$.
\end{proof}

Consider now the solutions $(\vc h^{(i)},p^{(i)})$ and $(\vc
H^{(i)},P^{(i)})$ ($i=1,2,3$) in the space $\mathcal{X}\times W^{-1,2}(\R^3)$ of
the auxiliary problems
\begin{equation}
\label{eq:aux1}
\begin{cases}
\nabla p^{(i)}- \Delta \vc h^{(i)} +\ell^2\Delta\Delta\vc h^{(i)}=0 &\text{in }\R^3\setminus\bs,\\
\vc h^{(i)}=\vc e_i &\text{on }\bs,
\end{cases}
\end{equation}
and
\begin{equation}
\label{eq:aux2}
\begin{cases}
\nabla P^{(i)}- \Delta \vc H^{(i)} +\ell^2\Delta\Delta \vc H^{(i)}=0 &\text{in }\R^3\setminus\bs,\\
\vc H^{(i)}=\vc e_i\times\vc x &\text{on }\bs.
\end{cases}
\end{equation}
We will show that the combinations
\begin{equation}\label{eq:combination}
\vc u=\sum_{i=1}^3 [\xi_i\vc h^{(i)}+\omega_i\vc H^{(i)}],\quad 
p=\sum_{i=1}^3 [\xi_i p^{(i)}+\omega_i P^{(i)}]+\vc g\cdot\vc x,
\end{equation}
for a suitable choice of the vectors $\vc\xi$ and $\vc\omega$, solve
the steady free fall problem. First we need to introduce four matrices,
which will be closely studied in Sections~\ref{sec:res}
and~\ref{sec:sym}.

\begin{defn}
The matrices $\vt K$, $\vt S$, $\vt C$, and $\vt B$
are defined in Cartesian components by
\begin{gather}
\label{eq:tensorK}
\vt K_{ji}:=-\lim_{r\to 0}\int_{\de V_r(\bs)}\vt T(\vc
h^{(i)},p^{(i)})\vc n\cdot\vc e_j,\\
\label{eq:tensorS}
\vt S_{ji}:=-\lim_{r\to 0}\int_{\de V_r(\bs)}\vt T(\vc
H^{(j)},P^{(j)})\vc n\cdot\vc e_i,\\
\label{eq:tensorC}
\vt C_{ji}:=-\lim_{r\to 0}\int_{\de V_r(\bs)}\vc x\times\vt T(\vc
h^{(j)},p^{(j)})\vc n\cdot\vc e_i,\\
\label{eq:tensorB}
\vt B_{ji}:=-\lim_{r\to 0}\int_{\de V_r(\bs)}\vc x\times\vt T(\vc
H^{(i)},P^{(i)})\vc n\cdot\vc e_j,
\end{gather}
where $\vc n$ is the outer normal to
$V_r(\bs)$. Following~\cite{Bre64}, they are called {\em resistance
  tensors}, and in particular $\vt K$ is the {\em translation tensor},
$\vt B$ the {\em rotation tensor}, and $\vt S$ and $\vt C$ the
{\em coupling tensors}. Moreover, we denote by $\vt A$ the $6\times 6$ matrix
\[
\vt A:=
\begin{pmatrix}
\vt K & \vt S \\
\vt C & \vt B
\end{pmatrix}.
\]
\end{defn}
We postpone to Theorems~\ref{thm:res} and~\ref{thm:pos}, in the next
section, the proof of a fundamental property: 
\begin{center}
\emph{the matrices $\vt K$,
$\vt B$ and $\vt A$ are symmetric and positive definite.}
\end{center}
Although an energetic argument of Brenner~\cite[Section
5--2]{HapBre65} is usually adopted in this case, we will prefer to
give a direct proof.

Now we can prove the main theorem of the section.
\begin{thm}[Existence Theorem]\label{thm:sol}
The differential problem \eqref{eq:st2}--\eqref{eq:st6} admits a solution $(\vc u,p,\vc\xi,\vc\omega,\vc g)$.
\end{thm}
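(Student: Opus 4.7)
The plan is to construct the solution explicitly via the ansatz~\eqref{eq:combination}, with $\vc\xi$, $\vc\omega$ and $\vc g$ still to be determined. By linearity and the defining properties~\eqref{eq:aux1}--\eqref{eq:aux2} of the auxiliary fields, the flow part of the ansatz kills the left-hand side of~\eqref{eq:st2} outside $\bs$, while the extra linear-pressure term $\vc g\cdot\vc x$ contributes exactly $\nabla(\vc g\cdot\vc x)=\vc g$, so~\eqref{eq:st2} is satisfied; the adherence~\eqref{eq:st3} is automatic, since $\vc h^{(i)}=\vc e_i$ and $\vc H^{(i)}=\vc e_i\times\vc x$ on $\bs$. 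It thus remains to pick $(\vc\xi,\vc\omega,\vc g)$ so as to enforce~\eqref{eq:st4}--\eqref{eq:st6}.

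To do so I would compute $\vc f$ and $\vc t$ as linear functions of $(\vc\xi,\vc\omega,\vc g)$. The piece coming from $\sum_i[\xi_i\vc h^{(i)}+\omega_i\vc H^{(i)}]$ and its pressure contributes, by the defining formulas~\eqref{eq:tensorK}--\eqref{eq:tensorB} together with the symmetry $\vt S=\vt C^\tsp$ stated below, the column vector $-\vt A(\vc\xi,\vc\omega)^\tsp$. The additional stress $-(\vc g\cdot\vc x)\vt 1$ from the extra pressure contributes nothing to either $\vc f$ or $\vc t$ in the limit $r\to 0$: a divergence-theorem computation in $B_R\setminus V_r(\bs)$, modelled on the well-definedness proof for $\vc f$, gives for the force $-\vc g|B_R|+\vc g|B_R\setminus V_r(\bs)|\to\vc 0$ (since $|V_r(\bs)|\to 0$), and for the torque the integrand $\vc x\times\vc g$ on $B_R$ is odd and hence has zero integral, while the outer contribution on $\partial B_R$ vanishes because $\vc x$ is parallel to the normal there.

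Combining these facts with the balances~\eqref{eq:st4},~\eqref{eq:st5} reduces the problem to the linear system
\[
\vt A\begin{pmatrix}\vc\xi\\ \vc\omega\end{pmatrix}=\begin{pmatrix} m_e\vc g\\ -m_c\vc r\times\vc g\end{pmatrix}.
\]
By the positive-definiteness of $\vt A$ stated in Theorems~\ref{thm:res} and~\ref{thm:pos}, $\vt A$ is invertible, and inverting yields $\vc\omega=\vt N\vc g$ for a specific real $3\times 3$ matrix $\vt N$ built from $\vt A^{-1}$, $m_e$, $m_c$ and $\vc r$.

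Finally,~\eqref{eq:st6} demands $\vc g\times\vt N\vc g=\vc 0$, i.e.\ $\vc g$ must be a real eigenvector of $\vt N$. Since the characteristic polynomial of the real $3\times 3$ matrix $\vt N$ has odd degree, it has at least one real root, and a corresponding real eigenvector can be normalized to satisfy $|\vc g|=1$. Inserting this $\vc g$ back into the linear system determines $\vc\xi$ and $\vc\omega$, and the ansatz then provides the full solution. I anticipate that the main subtle point is the appeal to the symmetry and positive-definiteness of $\vt A$---both of which encode the Reciprocal Theorem proved only in the next section---while the final eigenvector step is a clean consequence of the odd dimension of $\R^3$.
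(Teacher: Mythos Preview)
Your argument is correct and follows essentially the same route as the paper's proof: build $(\vc u,p)$ from the auxiliary solutions via~\eqref{eq:combination}, reduce the balances~\eqref{eq:st4}--\eqref{eq:st5} to a linear $6\times 6$ system governed by $\vt A$, and then close with a real-eigenvalue argument in $\R^3$ to enforce~\eqref{eq:st6}. The only cosmetic difference is the order of operations: the paper imposes $\vc\omega=\lambda\vc g$ \emph{before} solving the linear system (eliminating $\vc\xi$ via $\vt K^{-1}$ and arriving at $\vt F\vc g=\lambda\vc g$ with $\vt F=(\vt C\vt K^{-1}\vt S-\vt B)^{-1}(m_e\vt C\vt K^{-1}+m_c\,\vc r\times\cdot)$), whereas you invert $\vt A$ first and impose the parallelism afterwards; a Schur-complement computation shows your $\vt N$ coincides with the paper's $\vt F$. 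You are also slightly more explicit than the paper in checking that the hydrostatic part $\vc g\cdot\vc x$ of the pressure contributes nothing to $\vc f$ and $\vc t$ in the limit $r\to 0$, a point the paper leaves implicit.
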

\begin{proof}
  It is straightforward to check that the fields $\vc u$ and $p$
  defined by~\eqref{eq:combination} satisfy equations
  \eqref{eq:st2} and \eqref{eq:st3}. Equation \eqref{eq:st6} implies
  that $\vc\omega=\lambda\vc g$ for some $\lambda\in\R$, and equations
  \eqref{eq:st4} and \eqref{eq:st5} reduce to the following algebraic
  system in the six scalar unknowns $\vc\xi$, $\lambda$, and $\vc g$
  (recall that $|\vc g|=1$):
\begin{equation}\label{eq:algsys}
\left\{
\begin{aligned}
\vt K\vc\xi+\lambda\vt S\vc g&\mbox{}=m_e\vc g\\
\vt C\vc\xi+\lambda\vt B\vc g&\mbox{}=-m_c\vc r\times\vc g.
\end{aligned}
\right.
\end{equation}

It is now clear that the steady free fall problem admits a solution if
and only if~\eqref{eq:algsys} admits a solution,
and the latter fact is related to the properties of the matrix
\[
\vt A=
\begin{pmatrix}
\vt K & \vt S \\
\vt C & \vt B
\end{pmatrix}.
\]
Since $\vt K$ is non singular, the first equation
of~\eqref{eq:algsys} becomes
\begin{equation*}
\vc\xi =\vt K^{-1}(m_e\vc g-\lambda\vt S\vc g)
\end{equation*}
and one can eliminate $\vc\xi$ in the second equation
of~\eqref{eq:algsys}. Since $\vt A$ is non-singular, the linear
transformation
\[
\vt F\vc g:=(\vt C\vt K^{-1}\vt S-\vt B)^{-1}(m_e\vt C\vt K^{-1}\vc
g+m_c\vc r\times\vc g)
\]
is well-defined and non-singular, and we can
write~\eqref{eq:algsys} as
\begin{equation}
\label{eq:finsys}
\left\{
\begin{aligned}
\vc\xi & =\vt K^{-1}(m_e\vc g+\lambda\vt S\vc g)\\
\vt F\vc g & =\lambda\vc g.
\end{aligned}
\right.
\end{equation}
Being $\vt F$ a $3\times 3$ real matrix, it has at least one
real eigenvalue. Such an eigenvalue $\lambda$, the associated unit
eigenvector $\vc g$ and $\vc\xi$ calculated as in the first equation
of \eqref{eq:finsys}, together with the fields $\vc u$ and $p$
introduced in~\eqref{eq:combination}, furnish a solution for equations
\eqref{eq:st2}--\eqref{eq:st6}.
\end{proof}

\section{An analysis of the resistance tensors}\label{sec:res}

In~\eqref{eq:tensorK}--\eqref{eq:tensorB} we introduced the four
resistance tensors $\vt K$, $\vt S$, $\vt C$ and $\vt B$. In view of
the conditions on $\bs$ assumed in the auxiliary
problems~\eqref{eq:aux1}--\eqref{eq:aux2}, we can give an equivalent
characterization.
\begin{prop}\label{prop:tensors}
The resistance tensors are such that
\begin{gather*}
\vt K_{ji}=-\lim_{r\to 0}\int_{\de V_r(\bs)}\vt T(\vc
h^{(i)},p^{(i)})\vc n\cdot\vc h^{(j)},\\
\vt S_{ji}=-\lim_{r\to 0}\int_{\de V_r(\bs)}\vt T(\vc
H^{(j)},P^{(j)})\vc n\cdot\vc h^{(i)},\\
\vt C_{ji}=-\lim_{r\to 0}\int_{\de V_r(\bs)}\vt T(\vc
h^{(j)},p^{(j)})\vc n\cdot\vc H^{(i)},\\
\vt B_{ji}=-\lim_{r\to 0}\int_{\de V_r(\bs)}\vt T(\vc
H^{(i)},P^{(i)})\vc n\cdot\vc H^{(j)}.
\end{gather*}
\end{prop}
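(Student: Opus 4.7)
The four identities differ from the original definitions only by replacing $\vc e_j$ with $\vc h^{(j)}$ (or $\vc e_i \times \vc x$ with $\vc H^{(i)}$) in the integrand. Since these pairs agree on $\bs$ and $\partial V_r(\bs)$ shrinks to $\bs$, the claim is plausible; the task is to make it rigorous. The plan is to apply the Divergence theorem in $B_R \setminus V_r(\bs)$, for $B_R$ a sufficiently large ball, to convert the surface integrals to bulk integrals and exploit the fact that $\dvg \vt T$ vanishes pointwise outside $\bs$ by the auxiliary problems~\eqref{eq:aux1}--\eqref{eq:aux2}.

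Focus on $\vt K_{ji}$; the others are treated the same way. For any test field $\vc \phi$, applying the Divergence theorem to $\vt T(\vc h^{(i)},p^{(i)})^\tsp \vc \phi$ in $B_R \setminus V_r(\bs)$ (and using $\dvg \vt T = 0$ there) gives
\[
\int_{\partial V_r(\bs)} \vt T \vc n \cdot \vc \phi = \int_{\partial B_R} \vt T \vc N \cdot \vc \phi - \int_{B_R \setminus V_r(\bs)} \vt T : \nabla \vc \phi.
\]
With $\vc \phi = \vc e_j$ the bulk term vanishes, recovering $-\vt K_{ji} = \int_{\partial B_R} \vt T \vc N \cdot \vc e_j$ and showing independence of $r$. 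With $\vc \phi = \vc h^{(j)}$, integrability of $\vt T : \nabla \vc h^{(j)}$ on $B_R$ follows from Lemma~\ref{lem:aux}: after one integration by parts of the third-order term it reduces to an $L^1$-pairing of $\sym \nabla \vc h$'s and $\Delta \vc h$'s, which are both in $L^2$ by the definition of $\mathcal X$. Passing to the limit $r \to 0$ and subtracting the two identities, the claim reduces to the single equation
\[
\int_{\partial B_R} \vt T \vc N \cdot \vc w = \int_{B_R \setminus \bs} \vt T : \nabla \vc w, \qquad \vc w := \vc h^{(j)} - \vc e_j.
\]
I would prove this by applying the Divergence theorem to $\vt T^\tsp \vc w$ on the whole ball $B_R$: this yields the left-hand side plus an extra term $\int_{B_R} \vc w \cdot \dvg \vt T$. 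Since $\dvg \vt T(\vc h^{(i)},p^{(i)})$ is a distribution supported in $\bs$ (as discussed in the first Proposition of Section~\ref{sec:drhr}) and $\vc w$ is continuous with $\vc w|_{\bs} = 0$, this pairing vanishes, giving the identity.

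The remaining three formulas follow by the same reasoning, using that $\vc e_i \times \vc x$ plays the role of $\vc e_j$: it is divergence-free with vanishing Laplacian, so $\vt T : \nabla (\vc e_i \times \vc x) = 0$ after antisymmetry is exploited; and $\vc H^{(i)} - \vc e_i \times \vc x$ vanishes on $\bs$. The main obstacle is the rigorous justification of the final step, namely that pairing the distribution $\dvg \vt T$ with the continuous function $\vc w$ produces zero: the heuristic --- a measure concentrated on $\bs$ tested against a continuous factor that is null there --- is clear, but identifying the precise distributional order of the Lagrange multiplier associated with the adherence condition on a one-dimensional set requires some care. A concrete fallback is to carry out the integration by parts on $B_R \setminus V_{r'}(\bs)$ and pass $r' \to 0$ using the Hölder continuity of $\vc w$ together with a Stokeslet-type uniform bound on $\int_{\partial V_{r'}} |\vt T \vc n|$, thereby avoiding any direct pairing with a singular distribution.
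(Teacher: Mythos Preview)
Your argument is sound, but you are working much harder than the paper does. The paper's proof is a direct continuity estimate: since $\vc h^{(j)}$ is continuous and equals $\vc e_j$ on $\bs$, one has $\|\vc h^{(j)}-\vc e_j\|_{\infty,\partial V_r(\bs)}\to 0$ as $r\to 0$, and therefore
\[
\left|\int_{\partial V_r(\bs)}\vt T(\vc h^{(i)},p^{(i)})\vc n\cdot(\vc h^{(j)}-\vc e_j)\right|
\le \|\vc h^{(j)}-\vc e_j\|_{\infty,\partial V_r(\bs)}\int_{\partial V_r(\bs)}\bigl|\vt T(\vc h^{(i)},p^{(i)})\vc n\bigr|\to 0.
\]
The other three formulae are handled identically. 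No Divergence theorem on $B_R$, no distributional pairing with $\dvg\vt T$.

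Your route through the Divergence theorem and the measure-valued interpretation of $\dvg\vt T$ is legitimate --- Lemma~\ref{lem:aux} does represent the adherence condition by a Radon measure supported on $\bs$, so pairing it with a continuous $\vc w$ vanishing on $\bs$ is indeed zero --- but it recasts a simple limit computation as a distributional statement. What your approach buys is a structural explanation of \emph{why} only the trace on $\bs$ matters; what the paper's approach buys is brevity. Your ``concrete fallback'' at the end, using H\"older continuity of $\vc w$ together with a bound on $\int_{\partial V_{r'}}|\vt T\vc n|$, is precisely the paper's direct proof --- and you are right that the uniform bound on that traction integral is the one ingredient the paper leaves implicit.
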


\begin{proof}
Since $\vc h^{(i)}$ is continuous and $\vc h^{(i)}=\vc e_i$ on $\bs$,
one has, as $r\to 0$,
\[
\|\vc h^{(i)}-\vc e_i\|_{\infty,\de V_r(\bs)}:=\sup_{\vc x \in\de
  V_r(\bs)}|\vc h^{(i)}(\vc x)-\vc e_i|\to 0.
\]
Hence, considering for instance the translation tensor $\vt K$, it
follows that
\begin{multline*}
\left|\int_{\de V_r(\bs)}\vt T(\vc h^{(i)},p^{(i)})\vc
  n\cdot\vc h^{(j)}-
\int_{\de V_r(\bs)}\vt T(\vc h^{(i)},p^{(i)})\vc n\cdot\vc e_j\right|\\
\leq \|\vc h^{(j)}-\vc e_j\|_{\infty,\de V_r(\bs)}
\int_{\de V_r(\bs)}\left|\vt T(\vc h^{(i)},p^{(i)})\vc n\right|\to 0
\end{multline*}
as $r\to 0$. The proof of the remaining three formulae is similar.
\end{proof}

In the remainder of the section we will prove that $\vt K$ and $\vt B$
are symmetric and that $\vt C^\tsp=\vt S$. We first need a fundamental
property of steady incompressible flows at low Reynolds number, the
so-called {\em Reciprocal Theorem} (see \cite[Sec.\ 3-5]{HapBre65}),
which, roughly speaking, states a reciprocity property between two
solutions of the same equation, independently of the boundary
conditions. The validity of the theorem, which is quite trivial for
ordinary fluids, is not so obvious in the present case of hyperviscous
fluids, since the lack of further boundary conditions and the higher order
of the differential operator can break such a reciprocity. However, the
theorem can be recovered for the particular case of one-dimensional
bodies.

\begin{thm}[Reciprocal Theorem]\label{thm:reciprocal}
  Let $(\vc u_1,p_1),(\vc u_2,p_2)$ be two solutions in the space $\in
  \mathcal{X}\times W^{-1,2}(\R^3;\R)$ of the equation
\begin{equation}
\dvg\vt T(\vc u,p)=0\quad\text{in $\R^3\setminus\bs$,}
\end{equation}
where $\vt T(\vc u,p)$ is defined as in~\eqref{eq:defT}. Assume that
$\vc u_1,\vc u_2\in W^{3,\frac 6 5}_{loc}(\R^3;\R^3)$.
Then we have
\[
\lim_{r\to 0}\int_{\partial V_r(\bs)}\vt T(\vc
u_1,p_1)\vc n\cdot\vc u_2 = 
\lim_{r\to 0}\int_{\partial V_r(\bs)}\vt T(\vc
u_2,p_2)\vc n\cdot\vc u_1. 
\]
\end{thm}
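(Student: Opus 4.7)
\medskip

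The plan is to convert both surface integrals into a common bulk integral via Gauss--Green on the annular domain $\Omega_r^R:=B_R\setminus V_r(\bs)$ (with $R$ large enough that $V_r(\bs)\subset B_R$), then subtract the swapped version and observe that only a single boundary integral survives. Using $\dvg\vt T(\vc u_i,p_i)=0$ in $\R^3\setminus\bs$ and $\dvg\vc u_j=0$, the Gauss--Green formula yields
\[
\int_{\partial\Omega_r^R}\vt T(\vc u_1,p_1)\vc m\cdot\vc u_2
=\int_{\Omega_r^R}\vt T(\vc u_1,p_1):\nabla\vc u_2
=\int_{\Omega_r^R}\bigl(2\sym\nabla\vc u_1:\sym\nabla\vc u_2-\ell^2\,\nabla\Delta\vc u_1:\nabla\vc u_2\bigr),
\]
where $\vc m$ is the outer normal to $\Omega_r^R$ and the pressure contribution drops by incompressibility. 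The first term in the bulk is manifestly symmetric under $\vc u_1\leftrightarrow\vc u_2$, so all the asymmetry is concentrated in the hyperviscous piece.

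A short algebraic check reveals the crucial collapse:
\[
\partial_j\Delta u_{1,i}\,\partial_j u_{2,i}-\partial_j\Delta u_{2,i}\,\partial_j u_{1,i}
=\partial_j\bigl[\Delta u_{1,i}\,\partial_j u_{2,i}-\Delta u_{2,i}\,\partial_j u_{1,i}\bigr],
\]
the symmetric $\Delta\vc u_1\cdot\Delta\vc u_2$ cross-terms cancelling in the expansion. Subtracting the identity with the two solutions swapped and applying Gauss a second time therefore gives
\[
\int_{\partial\Omega_r^R}\!\bigl[\vt T(\vc u_1,p_1)\vc m\cdot\vc u_2-\vt T(\vc u_2,p_2)\vc m\cdot\vc u_1\bigr]
=-\ell^2\!\int_{\partial\Omega_r^R}\!\bigl[\Delta\vc u_1\cdot(\nabla\vc u_2)\vc m-\Delta\vc u_2\cdot(\nabla\vc u_1)\vc m\bigr].
\]
After splitting $\partial\Omega_r^R=\partial B_R\cup\partial V_r(\bs)$, the theorem is reduced to the vanishing of the asymmetric hyperviscous boundary integral on $\partial B_R$ as $R\to\infty$ and on $\partial V_r(\bs)$ as $r\to 0$.

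The main obstacle, and the only place where the one-dimensionality of $\bs$ is used decisively, is the vanishing on $\partial V_r(\bs)$ as $r\to 0$; here I would exploit the assumed regularity $\vc u_i\in W^{3,6/5}_{\rm loc}$, which by Sobolev embedding yields $\Delta\vc u_i\in L^2_{\rm loc}$ and $\nabla\vc u_i\in L^6_{\rm loc}$, so that $|\Delta\vc u_i||\nabla\vc u_j|\in L^{3/2}_{\rm loc}$ by H\"older. Since $\bs$ is a finite union of $C^1$ curves, the tubular estimate $|V_{2r}(\bs)\setminus V_r(\bs)|=O(r^2)$ combined with the co-area formula gives
\[
\int_r^{2r}\!\int_{\partial V_s(\bs)}|\Delta\vc u_i||\nabla\vc u_j|\,dS\,ds
=\int_{V_{2r}(\bs)\setminus V_r(\bs)}|\Delta\vc u_i||\nabla\vc u_j|\,dV
\leq |V_{2r}\setminus V_r|^{1/3}\|\Delta\vc u_i\|_{L^2}\|\nabla\vc u_j\|_{L^6}=O(r^{2/3}),
\]
so Chebyshev produces a sequence $r_k\to 0$ on which the surface integral itself vanishes. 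The analogous estimate on shells $B_{2R}\setminus B_R$, now relying on $\Delta\vc u_i\in L^2(\R^3)$ and $\nabla\vc u_i\in L^2\cap L^6(\R^3)$ built into the norm of $\mathcal{X}$, kills the $\partial B_R$ contribution along a sequence $R_k\to\infty$. Finally, each surface integral $\int_{\partial V_r(\bs)}\vt T(\vc u_i,p_i)\vc n\cdot\vc u_j$ admits a genuine limit as $r\to 0$: this follows by rerunning the divergence-theorem argument used for the well-definedness of $\vc f$, now applied to the $L^1_{\rm loc}$ integrand $\vt T(\vc u_i,p_i):\nabla\vc u_j$ (whose pressure term vanishes and whose hyperviscous part sits in $L^{6/5}\cdot L^6\subset L^1$ near $\bs$). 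Consequently the subsequential vanishing of the difference promotes to the full limit, and the Reciprocal Theorem follows.
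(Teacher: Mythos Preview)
Your overall architecture is essentially the paper's: integrate by parts on $B_R\setminus V_r(\bs)$, observe that the Newtonian bulk term is symmetric in $\vc u_1,\vc u_2$, and reduce the question to the vanishing of a residual hyperviscous boundary integral on $\partial V_r(\bs)$ and on $\partial B_R$. Your divergence identity
\[
\partial_j\Delta u_{1,i}\,\partial_j u_{2,i}-\partial_j\Delta u_{2,i}\,\partial_j u_{1,i}
=\partial_j\bigl[\Delta u_{1,i}\,\partial_j u_{2,i}-\Delta u_{2,i}\,\partial_j u_{1,i}\bigr]
\]
is a clean replacement for the paper's use of Green's second identity, and your argument that each $\int_{\partial V_r}\vt T(\vc u_i,p_i)\vc n\cdot\vc u_j$ has a genuine limit is correct.

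There is, however, a real gap in the step where you claim the hyperviscous term on $\partial V_r(\bs)$ vanishes along a sequence. From $\Delta\vc u_i\in L^2_{\mathrm{loc}}$ and $\nabla\vc u_j\in L^6_{\mathrm{loc}}$ you obtain, via H\"older and the tubular estimate $|V_{2r}\setminus V_r|=O(r^2)$,
\[
\int_r^{2r}\!\Bigl(\int_{\partial V_s}|\Delta\vc u_i||\nabla\vc u_j|\,dS\Bigr)ds
\;\le\; C\,|V_{2r}\setminus V_r|^{1/3}
\;=\;O(r^{2/3}).
\]
But averaging over $s\in[r,2r]$ only yields an $s_r$ with surface integral $\le C r^{2/3}/r=O(r^{-1/3})$, which does not tend to zero; even sharpening the constants to $o(1)$ (using that the $L^2$ and $L^6$ norms on the shrinking annulus vanish) leaves you with $o(r^{-1/3})$. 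No Chebyshev-type selection can extract a vanishing subsequence from this bound, because the integrability of $s\mapsto\int_{\partial V_s}|\Delta\vc u_i||\nabla\vc u_j|$ near $0$ does not force $\liminf$ to be zero (think of $s^{-1/2}$).

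The fix, which is precisely what the paper does, is to integrate by parts \emph{inward} on $V_r(\bs)$ rather than estimate on the annulus. For your boundary term one has
\[
\int_{\partial V_r}\Delta\vc u_1\cdot(\nabla\vc u_2)\vc n
=\int_{V_r}\bigl[\nabla\Delta\vc u_1:\nabla\vc u_2+\Delta\vc u_1\cdot\Delta\vc u_2\bigr],
\]
valid because the vector field $\vc F_j=\Delta u_{1,i}\,\partial_j u_{2,i}$ lies in $W^{1,1}_{\mathrm{loc}}(\R^3)$ across $\bs$ (the $W^{3,6/5}_{\mathrm{loc}}$ hypothesis gives $\nabla\Delta\vc u_1\in L^{6/5}_{\mathrm{loc}}$, pairing with $\nabla\vc u_2\in L^6$; and $\Delta\vc u_1,\Delta\vc u_2\in L^2$). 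Both integrands on the right are therefore in $L^1_{\mathrm{loc}}(\R^3)$, and the integral over $V_r(\bs)$ tends to zero by absolute continuity as $|V_r|\to 0$. With this replacement your proof goes through; note that this is the only place where the one-dimensionality of $\bs$ is \emph{not} actually needed --- what matters is that $\vc u_i$ retains enough regularity across $\bs$ to push the surface integral inside, which is exactly what the $W^{3,6/5}_{\mathrm{loc}}$ assumption provides.
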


\begin{proof}
  Consider a large ball $B_R$ containing $V_r(\bs)$ and apply
  Gauss-Green formula to the domain $B_R\setminus V_r(\bs)$: then
\[
\int_{\partial V_r(\bs)}\vt T(\vc u_1,p_1)\vc n\cdot\vc u_2 = 
-\int_{B_R\setminus V_r(\bs)}\vt T(\vc u_1,p_1)\cdot\nabla\vc u_2-
  \int_{\partial B_R}\vt T(\vc u_1,p_1)\vc n\cdot \vc u_2,
\]
where the normal in the left-hand side is exterior to $V_r(\bs)$ and
we kept into account that $\dvg\vt T(\vc u_2,p_2)=0$ on $\R^3\setminus
\bs$. The last surface integral on $\partial B_R$ vanishes as
$R\to+\infty$, since any solution to the hyperviscous Stokes' problem
decays as $1/|\vc x|$ (see Appendix~\ref{app}), hence $\vt T$ decays
as $1/|\vc x|^2$.

Now take the first term of the right-hand
side and use the constitutive prescription~\eqref{eq:defT}:
\begin{multline*}
-\int_{B_R\setminus V_r(\bs)}\vt T(\vc u_1,p_1)\cdot\nabla\vc u_2\\
=
-\int_{B_R\setminus V_r(\bs)}(\nabla\vc u_1+\nabla\vc u_1^\tsp)\cdot\nabla\vc u_2
+\ell^2 \int_{B_R\setminus V_r(\bs)}\nabla\lap\vc u_1\cdot \nabla\vc u_2\\ 
=-\int_{B_R\setminus V_r(\bs)}(\nabla\vc u_2+\nabla\vc u_2^\tsp)\cdot\nabla\vc u_1
+\ell^2 \int_{B_R\setminus V_r(\bs)}\nabla\lap\vc u_1\cdot \nabla\vc
u_2.
\end{multline*}
Consider the last term; since the gradient and Laplace operators
commute, by Green's second identity it follows that
\begin{align*}
\int_{B_R\setminus V_r(\bs)}\nabla\lap\vc u_1\cdot \nabla\vc
u_2 = &
\int_{B_R\setminus V_r(\bs)}\nabla\vc u_1\cdot \nabla\lap\vc
u_2\\
- & \int_{\de {V_r(\bs)}}\left[\nabla\vc u_2\cdot(\nabla\nabla\vc
   u_1)\vc n-
\nabla\vc u_1\cdot(\nabla\nabla\vc u_2)\vc n\right]\\
+ & \int_{\de B_R}\left[\nabla\vc u_2\cdot(\nabla\nabla\vc
   u_1)\vc n- 
\nabla\vc u_1\cdot(\nabla\nabla\vc u_2)\vc n\right].
\end{align*}
Now we claim that the surface integrals vanish as $r\to 0$ and $R\to
+\infty$. Indeed, take for instance the term
\[
\int_{\de V_r(\bs)}\nabla\vc u_2\cdot(\nabla\nabla\vc u_1)\vc n
\]
and apply again Gauss-Green formula inside $V_r(\bs)$. Then
\[
\int_{\de V_r(\bs)}\nabla\vc u_2\cdot(\nabla\nabla\vc u_1)\vc n = 
\int_{V_r(\bs)}[\nabla\nabla\vc u_1 \cdot \nabla\nabla\vc u_2 + 
\nabla\vc u_2\cdot\lap\nabla\vc u_1].
\]
Since $\vc u_1,\vc u_2\in \mathcal{X}$, then  $\nabla\nabla\vc
u_1\cdot\nabla\nabla\vc u_2 \in L^1(\R^3)$ and
\[
\lim_{r\to 0}\int_{V_r(\bs)}\nabla\nabla\vc u_1 \cdot
\nabla\nabla\vc u_2  =0.
\]
Moreover, since $\nabla\vc u_2\in
W^{1,2}(\R^3;\R^9)$, the Sobolev Embedding Theorem
ensures that $\nabla\vc u_2 \in
L^6(V_r(\bs);\R^9)$. We also assumed that
$\vc u_1,\vc u_2\in W^{3,\frac 6 5}_{loc}(\R^3;\R^3)$,
hence $\lap\nabla\vc u_1\in L^\frac 6 5(V_r(\bs);\R^9)$, $\nabla\vc
u_2\cdot\lap\nabla\vc u_1\in L^1(V_r(\bs))$ and we can conclude that
\[
\lim_{r\to 0}\int_{V_r(\bs)} \nabla\vc u_2\cdot\lap\nabla\vc u_1=0.
\]
Regarding the surface integral on $\partial B_R$, it vanishes as
$R\to+\infty$ since $\vc u_1,\vc u_2$ decay as $1/|\vc x|$ (see Appendix~\ref{app}). 
\end{proof}

Now we can easily obtain the main result of the section.
\begin{thm}\label{thm:res}
The resistance tensors are such that
\[
\vt K^\tsp=\vt K,\quad\vt B^\tsp=\vt B,\quad \vt C^\tsp=\vt S.
\]
In particular, the matrix $\vt A$ is symmetric.
\end{thm}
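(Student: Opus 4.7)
The plan is to reduce all three identities to the Reciprocal Theorem (Theorem~\ref{thm:reciprocal}) via the alternative characterization of the resistance tensors furnished by Proposition~\ref{prop:tensors}. Concretely, I would first observe that the hypotheses of the Reciprocal Theorem are met when the two solutions are chosen among $(\vc h^{(i)},p^{(i)})$ and $(\vc H^{(i)},P^{(i)})$: the auxiliary problems \eqref{eq:aux1}--\eqref{eq:aux2} are instances of the homogeneous equation $\dvg\vt T(\vc u,p)=0$ in $\R^3\setminus\bs$, and Lemma~\ref{lem:aux} provides $W^{3,q}_{\mathrm{loc}}$ regularity for every $1<q<3/2$, which in particular covers the required exponent $q=6/5$.

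For the symmetry of $\vt K$, I would start from the identity $\vt K_{ji}=-\lim_{r\to 0}\int_{\de V_r(\bs)}\vt T(\vc h^{(i)},p^{(i)})\vc n\cdot\vc h^{(j)}$ supplied by Proposition~\ref{prop:tensors} and apply Theorem~\ref{thm:reciprocal} with $\vc u_1=\vc h^{(i)}$ and $\vc u_2=\vc h^{(j)}$, which swaps the roles of $i$ and $j$ and yields $\vt K_{ji}=\vt K_{ij}$. Repeating the same argument with $(\vc H^{(i)},P^{(i)})$ in place of $(\vc h^{(i)},p^{(i)})$ gives $\vt B_{ji}=\vt B_{ij}$.

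For $\vt C^\tsp=\vt S$, I would write $\vt C_{ji}=-\lim_{r\to 0}\int_{\de V_r(\bs)}\vt T(\vc h^{(j)},p^{(j)})\vc n\cdot\vc H^{(i)}$, again using Proposition~\ref{prop:tensors}, and then invoke Theorem~\ref{thm:reciprocal} with $\vc u_1=\vc h^{(j)}$ and $\vc u_2=\vc H^{(i)}$. This transforms the expression into $-\lim_{r\to 0}\int_{\de V_r(\bs)}\vt T(\vc H^{(i)},P^{(i)})\vc n\cdot\vc h^{(j)}$, which by Proposition~\ref{prop:tensors} equals $\vt S_{ij}$. Thus $(\vt C^\tsp)_{ij}=\vt C_{ji}=\vt S_{ij}$, proving $\vt C^\tsp=\vt S$; combined with the two previous symmetries, the block form of $\vt A$ is then manifestly symmetric.

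The genuine analytic difficulty has already been resolved in Theorem~\ref{thm:reciprocal}, where the surface integrals over $\de V_r(\bs)$ had to be controlled as $r\to 0$ via the $W^{3,6/5}_{\mathrm{loc}}$ regularity afforded by the higher-order elliptic structure. Granted that theorem, the present statement is pure bookkeeping of indices and tensorial contractions; the only point demanding a brief check is that $6/5\in(1,3/2)$, so that Lemma~\ref{lem:aux} delivers the required regularity of the auxiliary fields, making Theorem~\ref{thm:reciprocal} applicable to every pair drawn from $\{(\vc h^{(i)},p^{(i)}),(\vc H^{(i)},P^{(i)})\}$.
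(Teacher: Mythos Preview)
Your proposal is correct and follows precisely the paper's own argument: invoke Lemma~\ref{lem:aux} to obtain the $W^{3,6/5}_{\mathrm{loc}}$ regularity needed for Theorem~\ref{thm:reciprocal}, then combine the Reciprocal Theorem with the alternative characterization of the resistance tensors in Proposition~\ref{prop:tensors}. The paper states this in two sentences, whereas you have spelled out the index bookkeeping explicitly, but the strategy is identical.
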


\begin{proof}
  Consider the solutions $(\vc h^{(i)},p^{(i)})$ and $(\vc
  H^{(i)},P^{(i)})$ of the auxiliary
  problems~\eqref{eq:aux1}-\eqref{eq:aux2}. By Lemma~\ref{lem:aux} it
  follows that $\vc h^{(i)},\vc H^{(i)}\in W^{3,q}_{loc}(\R^3;\R^3)$
  for any $1<q<\frac 3 2$, in particular for $q=\frac 65$. Then, the
  Reciprocal Theorem applies to such solutions and, by combining
  it with Proposition~\ref{prop:tensors}, we conclude the proof.
\end{proof}

We are now in a position to prove the positive definiteness
of the tensors $\vt K$, $\vt B$ and $\vt A$.
\begin{thm}\label{thm:pos}
The $6\times 6$ matrix $\vt A$ is positive definite. As a consequence,
also $\vt K$ and $\vt B$ are positive definite.
\end{thm}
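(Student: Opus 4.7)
The plan is to identify the quadratic form $\binom{\vc\xi}{\vc\omega}^\tsp\vt A\binom{\vc\xi}{\vc\omega}$ with the hyperviscous dissipation $\|\vc u\|^2$ of the flow generated by a rigid motion of linear velocity $\vc\xi$ and angular velocity $\vc\omega$, and then use that $\|\cdot\|$ is a genuine norm on $\mathcal{X}$. To this end, fix $(\vc\xi,\vc\omega)\in\R^6$ and, mimicking~\eqref{eq:combination}, define $\vc u:=\sum_{i=1}^3[\xi_i\vc h^{(i)}+\omega_i\vc H^{(i)}]$ and $p:=\sum_{i=1}^3[\xi_i p^{(i)}+\omega_i P^{(i)}]$. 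By linearity of the auxiliary problems~\eqref{eq:aux1}--\eqref{eq:aux2}, the pair $(\vc u,p)$ satisfies $\dvg\vt T(\vc u,p)=\vc 0$ in $\R^3\setminus\bs$ with $\vc u=\vc\xi+\vc\omega\times\vc x$ on $\bs$, and Lemma~\ref{lem:aux} provides $\vc u\in\mathcal{X}\cap W^{3,6/5}_{loc}(\R^3;\R^3)$, exactly the regularity needed to invoke the Reciprocal Theorem.

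I would then expand $\binom{\vc\xi}{\vc\omega}^\tsp\vt A\binom{\vc\xi}{\vc\omega}$ by means of the alternative expressions in Proposition~\ref{prop:tensors} and the symmetry $\vt S=\vt C^\tsp$ from Theorem~\ref{thm:res}. Gathering the four bilinear contributions (one for each pairing of translational and rotational indices) and using the reciprocity identity $\vc\xi^\tsp\vt C\vc\omega=\vc\omega^\tsp\vt S\vc\xi$ to collapse the two cross sums, one arrives at
\[
\binom{\vc\xi}{\vc\omega}^\tsp\vt A\binom{\vc\xi}{\vc\omega}=-\lim_{r\to 0}\int_{\partial V_r(\bs)}\vt T(\vc u,p)\vc n\cdot\vc u.
\]

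The surface integral is then converted into a volume integral by the Gauss--Green computation already used in the proof of Theorem~\ref{thm:reciprocal}. Integrating over $B_R\setminus V_r(\bs)$ and using $\dvg\vt T(\vc u,p)=\vc 0$ together with $\dvg\vc u=0$, the Newtonian part of $\vt T:\nabla\vc u$ produces $2|\sym\nabla\vc u|^2$, while the hyperviscous part $-\ell^2(\nabla\lap\vc u):\nabla\vc u$, after a second integration by parts that commutes $\nabla$ with $\lap$, yields $\ell^2|\lap\vc u|^2$. The boundary remainders vanish in the limits $R\to\infty$ (by the $1/|\vc x|$ decay from Appendix~\ref{app}) and $r\to 0$ (by the $W^{3,6/5}_{loc}$ regularity together with the Sobolev embedding $\nabla\vc u\in L^6_{loc}$, precisely as in Theorem~\ref{thm:reciprocal}). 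The outcome is
\[
\binom{\vc\xi}{\vc\omega}^\tsp\vt A\binom{\vc\xi}{\vc\omega}=\int_{\R^3}\bigl(2|\sym\nabla\vc u|^2+\ell^2|\lap\vc u|^2\bigr)=\|\vc u\|^2\geq 0.
\]

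Strict positivity follows because $\|\cdot\|$ is a norm on $\mathcal{X}$: if the form vanishes then $\vc u\equiv\vc 0$, and the trace on $\bs$ gives $\vc\xi+\vc\omega\times\vc x=\vc 0$ at every point of $\bs$. Assuming $\bs$ is not contained in a single straight line, one may pick three points of $\bs$ whose pairwise displacements span a plane; the resulting constraints force $\vc\omega=\vc 0$ and hence $\vc\xi=\vc 0$, so that $\vt A$ is positive definite. The corresponding statements for $\vt K$ and $\vt B$ follow by restricting the identity to $\vc\omega=\vc 0$ and to $\vc\xi=\vc 0$, respectively. The main obstacle is the careful index bookkeeping in the second step: collapsing the four bilinear sums into a single surface integral in the combined flow $\vc u$ requires repeated invocation of reciprocity, while the subsequent Gauss--Green reduction reproduces Theorem~\ref{thm:reciprocal}'s argument essentially verbatim.
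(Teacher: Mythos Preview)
Your argument follows the paper's proof essentially step for step: define the combined flow $(\vc u,p)$, identify the quadratic form with $-\lim_{r\to 0}\int_{\partial V_r(\bs)}\vt T(\vc u,p)\vc n\cdot\vc u$ via Proposition~\ref{prop:tensors} and reciprocity, then reduce to a nonnegative volume integral by Gauss--Green with the same boundary-term analysis. Two minor differences are worth recording. First, your integration by parts lands on $2|\sym\nabla\vc u|^2+\ell^2|\lap\vc u|^2=\|\vc u\|^2$, whereas the paper arrives at $|\nabla\vc u|^2+\ell^2|\nabla\nabla\vc u|^2$; these are equivalent (the cross term $\nabla\vc u^\tsp\!:\!\nabla\vc u$ integrates to zero, and $\int|\lap\vc u|^2=\int|\nabla\nabla\vc u|^2$ once the same boundary terms are disposed of), but your version has the pleasant feature of recovering the defining norm on $\mathcal{X}$ exactly. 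Second, you explicitly argue strict definiteness from $\|\vc u\|=0\Rightarrow\vc u\equiv\vc 0\Rightarrow\vc\xi+\vc\omega\times\vc x=\vc 0$ on $\bs$, together with the geometric hypothesis that $\bs$ is not contained in a single straight line; the paper simply asserts strict positivity without isolating this point, so your treatment is in fact more careful here (and your caveat is genuinely needed: for a straight segment the rotation tensor $\vt B$ degenerates along the axis).
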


\begin{proof}
  For $\vc\xi=(\xi_1,\xi_2,\xi_3)$ and
  $\vc\omega=(\omega_1,\omega_2,\omega_3)$ set 
\[
\vc u =\sum_{i=1}^3[\xi_i\vc h^{(i)}+\omega_i\vc H^{(i)}],\quad 
p=\sum_{i=1}^3[\xi_ip^{(i)}+\omega_iP^{(i)}],
\]
where $(\vc h^{(i)},p^{(i)})$ and $(\vc H^{(i)},P^{(i)})$ are the
solutions of the auxiliary problems~\eqref{eq:aux1}--\eqref{eq:aux2}.
Using Proposition~\ref{prop:tensors}, the Reciprocal Theorem and the
linearity of $\vt T$ one can check that
\begin{gather*}
\begin{pmatrix}\vc\xi\\ \vc\omega\end{pmatrix}\cdot
\vt A \begin{pmatrix}\vc\xi\\ \vc\omega\end{pmatrix} =\vc\xi\cdot \vt
K\vc\xi + 2\vc\omega\cdot \vt C\vc\xi +
\vc\omega\cdot \vt B\vc\omega
=-\lim_{r\to 0} 
\int_{\de V_r(\bs)}\vt T(\vc u,p)\vc n\cdot\vc u.
\end{gather*}
Now argue as in the proof of the Reciprocal Theorem. Take a large ball
$B_R$ containing $V_r(\bs)$ and apply Gauss-Green formula to the
domain $B_R\setminus V_r(\bs)$ to obtain
\[
-\int_{\partial V_r(\bs)}\vt T(\vc u,p)\vc n\cdot\vc u = 
\int_{B_R\setminus V_r(\bs)}\vt T(\vc u,p)\cdot\nabla\vc u+
  \int_{\partial B_R}\vt T(\vc u,p)\vc n\cdot \vc u,
\]
where the normal in the left-hand side is exterior to $V_r(\bs)$ and
we kept into account that $\dvg\vt T(\vc u,p)=0$ on $\R^3\setminus
\bs$. The last surface integral on $\partial B_R$ vanishes as
$R\to+\infty$ since the solution decays as
$1/|\vc x|$, hence $\vt T$ decays as $1/|\vc x|^2$.

Now consider the first term of the right-hand side and use the
constitutive prescription~\eqref{eq:defT}:
\begin{multline}\label{p}
\int_{B_R\setminus V_r(\bs)}\vt T(\vc u,p)\cdot\nabla\vc u
=\int_{B_R\setminus V_r(\bs)}(\nabla\vc u+\nabla\vc u^\tsp)\cdot\nabla\vc u
-\ell^2 \int_{B_R\setminus V_r(\bs)}\nabla\lap\vc u\cdot \nabla\vc u\\ 
=\int_{B_R\setminus V_r(\bs)}|\nabla\vc u|^2
+\int_{B_R\setminus V_r(\bs)}\nabla\vc u^\tsp\cdot\nabla\vc u
-\ell^2 \int_{B_R\setminus V_r(\bs)}\nabla\lap\vc u\cdot \nabla\vc u.
\end{multline}
We deal with the second integral, taking into account that $\dvg\vc u=0$:
\[
\int_{B_R\setminus V_r(\bs)}\nabla\vc u^\tsp\cdot\nabla\vc u
= -\int_{\de {V_r(\bs)}}\vc u\cdot(\nabla\vc u^\tsp \vc n)
+\int_{\de B_R}\vc u\cdot(\nabla\vc u^\tsp \vc n).
\]
The last integral on $\de B_R$ vanishes as $R\to\infty$ for the usual
asymptotic behavior at infinity. By applying the Gauss-Green theorem
on the other integral it follows that
\[
- \int_{\de {V_r(\bs)}}\vc u\cdot(\nabla\vc u^\tsp \vc n)
=- \int_{V_r(\bs)}\nabla\vc u\cdot\nabla\vc u^\tsp -
 \int_{V_r(\bs)}\vc u\cdot\dvg(\nabla\vc u^\tsp)
\]
and both terms vanish as $r\to 0$ since the first integrand is in
$L^1$ and in the last integral one has $\dvg\vc u=0$.

Now we deal with the last integral of ~\eqref{p}.  Since the gradient
and Laplace operators commute, we have
\begin{align*}
-\ell^2\int_{B_R\setminus V_r(\bs)}\nabla\lap\vc u\cdot \nabla\vc
u & = \ell^2\int_{B_R\setminus V_r(\bs)}|\nabla\nabla\vc u|^2\\
& + \ell^2 \int_{\de {V_r(\bs)}}\nabla\vc u\cdot(\nabla\nabla\vc
   u)\vc n
- \ell^2\int_{\de B_R}\nabla\vc u\cdot(\nabla\nabla\vc
   u)\vc n.
\end{align*}
Following the last part of the proof of the Reciprocal Theorem, it can
be proved that the surface integrals vanish as $r\to 0$ and $R\to
+\infty$. Summarizing,
\[
\begin{pmatrix}\vc\xi\\ \vc\omega\end{pmatrix}\cdot
\vt A \begin{pmatrix}\vc\xi\\ \vc\omega\end{pmatrix}
=\lim_{\substack{r\to 0\\ R\to+\infty}}\int_{B_R\setminus V_r(\bs)}|\nabla\vc u|^2 + 
\ell^2\int_{B_R\setminus V_r(\bs)}|\nabla\nabla\vc u|^2,
\]
hence $\vt A$ is (strictly) positive definite.
\end{proof}

\section{Translational solutions for bodies with symmetries}
\label{sec:sym}

The free fall of a one-dimensional body in a hyperviscous fluid at low
Reynolds number is characterized by 21 independent coefficients: 12
coefficients for the tensors $\vt K$ and $\vt B$ and 9 coefficients
for the coupling tensor $\vt C$. However, material symmetries of the
body can significantly reduce such a number. Moreover, the symmetries
induce some restrictions on the form of the resistance tensors. We are
specifically interested in symmetries which induce purely
translational motions of the body (that is, with $\vc \omega=0$). We will
now study some particular symmetries.

If the body is invariant under a change of frame given by an
orthogonal matrix $\vt Q$, then also the solutions $\vc h^{(i)}$ of
the auxiliary problems~\eqref{eq:aux1} do not change; on the contrary,
the solutions $\vc H^{(i)}$ of~\eqref{eq:aux2}
undergo a sign change if $\det \vt Q=-1$, due to the presence of the
vector product in the boundary condition. Hence one can prove that
\begin{equation}
\label{eq:change}
\vt K=\vt Q^\tsp \vt K \vt Q,\quad 
\vt B=\vt Q^\tsp \vt B \vt Q,\quad
\vt C=(\det\vt Q)\vt Q^\tsp \vt C \vt Q.
\end{equation}

\subsection{Bodies with a plane of symmetry}
We say that the body $\bs$ has $x_2x_3$ as a {\em plane of
  material symmetry}, if the density function $\rho$ of the body satisfies
\[
\rho(-x_1,x_2,x_3)=\rho(x_1,x_2,x_3)\qquad\text{for every
  $(x_1,x_2,x_3)\in\R^3$}.
\]
In particular, a {\em homogeneous} body has a plane of material symmetry if,
and only if, it is symmetric with respect to that plane.
\begin{prop}\label{prop:planesym}
  Assume that $\bs$ has $x_2x_3$ as a plane of material symmetry. Then
  the resistance tensors have the form
\[
\vt K=
\begin{pmatrix}
\vt K_{11} & 0 & 0\\
0 & \vt K_{22} & \vt K_{23}\\
0 & \vt K_{23} & \vt K_{33}
\end{pmatrix},\quad
\vt B=
\begin{pmatrix}
\vt B_{11} & 0 & 0\\
0 & \vt B_{22} & \vt B_{23}\\
0 & \vt B_{23} & \vt B_{33}
\end{pmatrix},\quad
\vt C=
\begin{pmatrix}
0 & \vt C_{12} & \vt C_{13}\\
\vt C_{21} & 0 & 0\\
\vt C_{31} & 0 & 0
\end{pmatrix}.
\]
\end{prop}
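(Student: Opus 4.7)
The plan is to apply the transformation rule \eqref{eq:change} to the specific orthogonal matrix representing the reflection across the $x_2x_3$-plane, namely $\vt Q=\operatorname{diag}(-1,1,1)$. By the material symmetry hypothesis $\bs$ is invariant under $\vt Q$, so \eqref{eq:change} applies; moreover $\det\vt Q=-1$, which will toggle a sign on $\vt C$ but not on $\vt K$ or $\vt B$. Everything reduces to an entrywise linear-algebra check, so no analytic difficulty remains once this $\vt Q$ is plugged in.

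Since $\vt Q$ is diagonal, conjugation acts componentwise: for any $3\times 3$ matrix $\vt M$,
\[
(\vt Q^\tsp \vt M\vt Q)_{ij}=\varepsilon_i\varepsilon_j\,\vt M_{ij},\qquad \varepsilon_1=-1,\ \varepsilon_2=\varepsilon_3=1.
\]
Applying this observation to the first two relations in \eqref{eq:change} gives $\vt K_{ij}=\varepsilon_i\varepsilon_j\vt K_{ij}$ and $\vt B_{ij}=\varepsilon_i\varepsilon_j\vt B_{ij}$, forcing those entries with $\varepsilon_i\varepsilon_j=-1$ to vanish; these are precisely the entries with exactly one index equal to $1$, i.e.\ $\vt K_{12}=\vt K_{13}=\vt K_{21}=\vt K_{31}=0$ and the analogous ones for $\vt B$. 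Combined with the symmetry $\vt K^\tsp=\vt K$, $\vt B^\tsp=\vt B$ from Theorem~\ref{thm:res}, this yields the block-diagonal shape stated for $\vt K$ and $\vt B$.

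For $\vt C$ the third identity in \eqref{eq:change} carries the factor $\det\vt Q=-1$, so $\vt C_{ij}=-\varepsilon_i\varepsilon_j\vt C_{ij}$. Now the vanishing condition is flipped: the entries that must be zero are those with $\varepsilon_i\varepsilon_j=+1$, namely $i=j=1$ together with all $(i,j)\in\{2,3\}\times\{2,3\}$. Writing this out produces exactly the pattern claimed for $\vt C$, with the only possibly nonzero entries being $\vt C_{12},\vt C_{13},\vt C_{21},\vt C_{31}$.

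There is essentially no obstacle: the content sits entirely in \eqref{eq:change}, and the verification is a one-line per-entry sign check once the reflection $\vt Q$ is fixed. If anything, the only point needing care is remembering that, unlike $\vt K$ and $\vt B$, the coupling tensor $\vt C$ picks up the pseudotensorial sign from $\det\vt Q=-1$, which is precisely what forces its off-block (rather than block-diagonal) sparsity pattern.
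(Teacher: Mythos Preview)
Your argument is correct and is essentially identical to the paper's own proof: both plug the reflection $\vt Q=\operatorname{diag}(-1,1,1)$ into the transformation rule~\eqref{eq:change} and read off which entries are forced to vanish, with the $\det\vt Q=-1$ factor flipping the pattern for $\vt C$. Your explicit sign bookkeeping via $\varepsilon_i\varepsilon_j$ just makes the per-entry check slightly more systematic than the paper's bare list of conclusions.
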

\begin{proof}
  Since $\Sigma$ is invariant under the orthogonal transformation
  given by
\[
\vt Q=
\begin{pmatrix}
-1 & 0 & 0\\
0 & 1 & 0\\
0 & 0 & 1
\end{pmatrix},\quad
\det \vt Q = -1,
\]
then formulae~\eqref{eq:change} yield
\begin{gather*}
\vt K_{12}=\vt K_{13}=\vt B_{12}=\vt B_{13}=0,\\
\vt C_{11}=\vt C_{22}=\vt C_{33}=\vt C_{23}=\vt C_{32}=0.\qedhere
\end{gather*}
\end{proof}

Now consider the system~\eqref{eq:finsys} which solves the problem of
the steady free fall, in the case when the body $\Sigma$ has a
plane of material symmetry, say $x_2x_3$. Suppose moreover that
$\Sigma$ is homogeneous, so that the center of mass and the centroid
coincide, hence $\vc r=0$. In such a case the second equation of~\eqref{eq:finsys}
becomes
\begin{equation}
\label{eq:eigen}
m_e(\vt C\vt K^{-1}\vt C^\tsp-\vt B)^{-1}\vt C\vt K^{-1}\vc
g=\lambda\vc g.
\end{equation}
Since $\vc \omega=\lambda \vc g$, we get a translational
solution whenever $\lambda=0$. Being $(\vt C\vt K^{-1}\vt C^\tsp-\vt B)$
and $\vt K$ positive definite matrices,~\eqref{eq:eigen} has a
solution $\lambda=0$ if, and only if, $\det \vt C=0$. In the case of a
body with a plane of material symmetry, indeed, the latter condition
is satisfied and it is easy to check that an eigenvector of $\vt C$,
say $\vt u_0$, corresponding to the eigenvalue $\lambda=0$ lies in the
plane $x_2x_3$. Hence one has
\[
m_e(\vt C\vt K^{-1}\vt C^\tsp-\vt B)^{-1}\vt C\vt K^{-1}\vc g=0
\iff
\vc g=\vt K\vc u_0;
\]
by the form of $\vt K$ given in Proposition~\ref{prop:planesym}, also the
vector $\vt K \vc u_0$ lies in the plane $x_2x_3$. We can summarize
the latter result in the following theorem:
\begin{thm}
Assume that $\bs$ has $x_2x_3$ as a plane of material symmetry. Then
there exist an orientation of the body, lying in the same plane of
symmetry, which gives rise to a purely translational solution.
\end{thm}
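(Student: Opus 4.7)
The theorem essentially asks us to formalize the computation sketched in the paragraph immediately preceding its statement. The key observation from equation \eqref{eq:eigen} is that, since $\bs$ is homogeneous ($\vc r=\vc 0$) and the matrix $(\vt C\vt K^{-1}\vt C^\tsp-\vt B)$ is invertible (by the positive definiteness of $\vt A$ in Theorem \ref{thm:pos}), a purely translational solution (i.e.\ $\vc\omega=\lambda\vc g=\vc 0$) corresponds precisely to finding a unit vector $\vc g\in\R^3$ such that $\vt C\vt K^{-1}\vc g=\vc 0$. Equivalently, setting $\vc u_0:=\vt K^{-1}\vc g$, we need $\vc u_0\in\ker\vt C$. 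So my task reduces to exhibiting such a $\vc g$ that moreover lies in the plane $x_2x_3$.

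First, I would observe from Proposition \ref{prop:planesym} that the second and third columns of
\[
\vt C=\begin{pmatrix}0&\vt C_{12}&\vt C_{13}\\ \vt C_{21}&0&0\\ \vt C_{31}&0&0\end{pmatrix}
\]
are both of the form $(\ast,0,0)^\tsp$, hence linearly dependent, so $\det\vt C=0$ and $\ker\vt C$ is nontrivial. The kernel equations read $\vt C_{12}u_2+\vt C_{13}u_3=0$ and $\vt C_{21}u_1=\vt C_{31}u_1=0$. Irrespective of whether $\vt C_{21}$ and $\vt C_{31}$ vanish, I can always pick $u_1=0$ together with a nonzero pair $(u_2,u_3)$ satisfying the first relation; this yields a nonzero $\vc u_0\in\ker\vt C$ whose first component vanishes, i.e.\ lying in the $x_2x_3$ plane.

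Next, I would apply $\vt K$ to $\vc u_0$: because of the block structure of $\vt K$ in Proposition \ref{prop:planesym}, the vector $\vt K\vc u_0$ also has vanishing first component, and is nonzero since $\vt K$ is positive definite (Theorem \ref{thm:pos}). Setting $\vc g:=\vt K\vc u_0/|\vt K\vc u_0|$, I obtain a unit vector in the symmetry plane with $\vt K^{-1}\vc g\in\ker\vt C$. Hence $\lambda=0$ is a (real) eigenvalue of the transformation $\vt F$ of Theorem \ref{thm:sol} with eigenvector $\vc g$, and taking $\vc\omega=\vc 0$ together with $\vc\xi=m_e\vt K^{-1}\vc g$ and the corresponding flow $(\vc u,p)$ built as in \eqref{eq:combination} produces a steady solution of \eqref{eq:st2}--\eqref{eq:st6} with orientation $\vc g$ in the plane of symmetry and $\vc\omega=\vc 0$, as required.

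There is no real obstacle here: the hard work has already been done in deriving \eqref{eq:finsys}, proving the invertibility of the relevant resistance matrices, and working out the sparsity pattern of $\vt C$ and $\vt K$ under the symmetry hypothesis. The only point to be careful about is the normalization ensuring $|\vc g|=1$ and the interpretation of ``orientation of the body lying in the plane of symmetry'' as the assertion that the gravity direction $\vc g$ (in the co-moving frame) belongs to that plane.
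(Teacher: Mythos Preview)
Your proposal is correct and follows essentially the same route as the paper: reduce to the eigenvalue problem \eqref{eq:eigen}, use the sparsity pattern of $\vt C$ from Proposition~\ref{prop:planesym} to exhibit a kernel vector $\vc u_0$ with vanishing first component, and then set $\vc g=\vt K\vc u_0/|\vt K\vc u_0|$, which stays in the symmetry plane by the block structure of $\vt K$. If anything, you are slightly more careful than the paper in explicitly choosing $u_1=0$ (rather than asserting that any kernel vector lies in the plane) and in normalizing $\vc g$.
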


Now it is quite easy to study the class of bodies with two orthogonal
planes of symmetry:
\begin{cor}
If the body has two orthogonal planes of symmetry, say $x_1x_3$ and
$x_2x_3$, then the free fall along the $x_3$-direction gives rise to a
 purely translational motion.
\end{cor}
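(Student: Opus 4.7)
The plan is to apply Proposition~\ref{prop:planesym} twice, once per plane of symmetry, to pin down the structure of $\vt K$, $\vt B$, and $\vt C$ strongly enough that choosing $\vc g=\vc e_3$ automatically forces $\lambda=0$ in the eigenvalue equation~\eqref{eq:eigen}.

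First I would re-run the argument of Proposition~\ref{prop:planesym} with the orthogonal transformation $\vt Q'=\mathrm{diag}(1,-1,1)$ (again with $\det\vt Q'=-1$) corresponding to the $x_1x_3$ plane of symmetry. By the same invariance identities~\eqref{eq:change}, this yields the ``mirrored'' vanishings
\[
\vt K_{12}=\vt K_{23}=\vt B_{12}=\vt B_{23}=0,\qquad
\vt C_{11}=\vt C_{22}=\vt C_{33}=\vt C_{13}=\vt C_{31}=0.
\]
Intersecting these conditions with the ones from the $x_2x_3$ plane given in Proposition~\ref{prop:planesym} leaves $\vt K$ and $\vt B$ diagonal, and
\[
\vt C=\begin{pmatrix} 0 & \vt C_{12} & 0 \\ \vt C_{21} & 0 & 0 \\ 0 & 0 & 0 \end{pmatrix}.
\]
In particular the third column of $\vt C$ vanishes, and a homogeneous body with two orthogonal planes of symmetry has its centroid on their common axis, so choosing the $x_3$-axis through the centroid gives $\vc r=\vc 0$, which is precisely the setting in which equation~\eqref{eq:eigen} was derived.

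Then I would simply plug $\vc g=\vc e_3$ into~\eqref{eq:eigen}. Since $\vt K$ is diagonal, $\vt K^{-1}\vc e_3=\vt K_{33}^{-1}\vc e_3$, and since the third column of $\vt C$ is zero, $\vt C\vt K^{-1}\vc e_3=\vt K_{33}^{-1}\vt C\vc e_3=\vc 0$. Therefore~\eqref{eq:eigen} reduces to $\lambda\vc e_3=\vc 0$, so $\lambda=0$ and $\vc\omega=\lambda\vc g=\vc 0$; the first equation of~\eqref{eq:finsys} then determines $\vc\xi=m_e\vt K^{-1}\vc e_3=(m_e/\vt K_{33})\vc e_3$, producing a purely translational solution in the $x_3$-direction.

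There is no substantive obstacle here: the only step requiring a little care is verifying that the alternative reflection $\vt Q'$ enters the invariance identities~\eqref{eq:change} with the same sign conventions as in Proposition~\ref{prop:planesym} (it does, because $\det\vt Q'=-1$ as well). Once the structure of the resistance tensors is in hand, the conclusion is an immediate matrix computation.
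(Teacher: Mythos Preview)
Your proposal is correct and follows essentially the same route as the paper: both use the two reflections $\mathrm{diag}(-1,1,1)$ and $\mathrm{diag}(1,-1,1)$ in~\eqref{eq:change} to force $\vt K$, $\vt B$ diagonal and $\vt C$ supported on the $(1,2)$-block, and then observe that $\vc e_3$ yields $\lambda=0$. The only cosmetic difference is that the paper phrases the last step as identifying $\vc u_0=(0,0,a)$ in $\ker\vt C$ and setting $\vc g=\vt K\vc u_0/|\vt K\vc u_0|$, whereas you plug $\vc g=\vc e_3$ directly into~\eqref{eq:eigen}; these are equivalent since $\vt K$ is diagonal.
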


\begin{proof}
Since the body is invariant under the orthogonal matrices
\[
\begin{pmatrix}
1 & 0 & 0\\
0 & -1 & 0\\
0 & 0 & 1
\end{pmatrix},\quad
\begin{pmatrix}
-1 & 0 & 0\\
0 & 1 & 0\\
0 & 0 & 1
\end{pmatrix},
\]
it is easy to check that $\vt K$ and $\vt B$ are diagonal, and $\vt C$
has the form
\[
\vt C=
\begin{pmatrix}
0 & \vt C_{12} & 0\\
\vt C_{21} & 0 & 0\\
0 & 0 & 0
\end{pmatrix}.
\]
Hence $\vc u_0=(0,0,a)$ is an eigenvector of $\vt C$ corresponding to the null
eigenvalue, and the motion with orientation given by 
\[
\vc g=\frac{\vt K\vc u_0}{|\vt K\vc u_0|}=(0,0,\pm 1)
\]
furnishes a purely translational solution.
\end{proof}

\subsection{Helicoidally symmetric bodies}

Now we study bodies which are invariant under the action of a rotation
of angle $\theta\in[0,2\pi[$ around the $x_1$-axis, which is represented
by the orthogonal matrix
\[
\vt R_\theta:=\begin{pmatrix}
1 & 0 & 0\\
0 & \cos\theta & -\sin\theta\\
0 & \sin\theta & \cos\theta
\end{pmatrix}.
\]
Following~\cite{HapBre65}, we say that a (one-dimensional) body $\bs$
is {\em helicoidally symmetric} if there exists a co-moving frame such that
\[
\vt R_\theta \bs = \bs\quad\text{for some $\theta\neq 0,\pi$},
\]
that is, if it is invariant under a discrete group of co-axial rotations of
order strictly greater than 2. For instance, a homogeneous body
composed of three concurrent edges of a regular tetrahedron is
helicoidally symmetric with $\theta=2\pi/3$.
\begin{prop}\label{prop:heli}
  Assume that $\bs$ is helicoidally symmetric around $x_1$. Then $\vt
  K$ and $\vt B$ are diagonal with $\vt K_{22}= \vt K_{33}$ and
  $\vt B_{22}= \vt B_{33}$, and $\vt C$ is of the form
\[
\vt C=
\begin{pmatrix}
\vt C_{11} & 0 & 0\\
0 & \vt C_{22} & \vt C_{23}\\
0 & -\vt C_{23} & \vt C_{33}
\end{pmatrix}.
\]
\end{prop}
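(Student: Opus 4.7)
The plan is to apply the transformation formulae~\eqref{eq:change} with $\vt Q=\vt R_\theta$. Since $\det\vt R_\theta=1$ and $\vt R_\theta\bs=\bs$ by hypothesis, those identities collapse to $\vt R_\theta^\tsp\vt K\vt R_\theta=\vt K$ (and analogously for $\vt B$ and $\vt C$), i.e.\ each of the three resistance tensors commutes with $\vt R_\theta$. The content of the proposition is then a purely linear-algebraic statement about real $3\times 3$ matrices commuting with a nontrivial axial rotation, together with the symmetry of $\vt K$ and $\vt B$ provided by Theorem~\ref{thm:res}.

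The first step is a block reduction. Since $\theta\in(0,2\pi)\setminus\{\pi\}$, the $1$-eigenspace of $\vt R_\theta$ in $\R^3$ is exactly $\mathrm{span}(\vc e_1)$, and the orthogonal plane $\Pi=\mathrm{span}(\vc e_2,\vc e_3)$ is invariant under $\vt R_\theta$, which restricts there to a planar rotation distinct from $\pm I$. Any matrix commuting with $\vt R_\theta$ must preserve both subspaces: the invariance of $\mathrm{span}(\vc e_1)$ is immediate, while the invariance of $\Pi$ follows from the orthogonality of $\vt R_\theta$, or equivalently from a direct component computation that uses $\theta\neq 0,\pi$ to force the $\vc e_1$-component of the image of $\vc e_2,\vc e_3$ to vanish. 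Hence $\vt K$, $\vt B$ and $\vt C$ are all block diagonal in the splitting $\R\vc e_1\oplus\Pi$, with a scalar $(1,1)$-entry and a $2\times 2$ block commuting with the planar rotation.

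The second step classifies those $2\times 2$ blocks. An elementary calculation shows that a real $2\times 2$ matrix commutes with a rotation by $\theta\neq 0,\pi$ if and only if it lies in the $\R$-span of the identity and $J=\bigl(\begin{smallmatrix}0&-1\\1&0\end{smallmatrix}\bigr)$; such a matrix has equal diagonal entries and opposite off-diagonal entries. For $\vt K$ and $\vt B$ the $2\times 2$ block is moreover symmetric by Theorem~\ref{thm:res}, so the $J$-component must vanish; this produces diagonal blocks with $\vt K_{22}=\vt K_{33}$ and $\vt B_{22}=\vt B_{33}$. No such symmetry is available for $\vt C$, so the $J$-component survives and yields the announced form, with $\vt C_{22}=\vt C_{33}$ and the opposite off-diagonal pair $\pm\vt C_{23}$.

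I do not anticipate a serious obstacle; the argument is essentially the representation theory of the cyclic subgroup generated by $\vt R_\theta$ acting on $\R^3$. The only care points are invoking $\theta\neq 0,\pi$ twice—once to ensure that $\vc e_1$ is the unique real fixed direction of $\vt R_\theta$, and once to exclude $\pm I$ as the planar restriction, which would otherwise permit arbitrary $2\times 2$ commuting matrices and trivialise the constraint.
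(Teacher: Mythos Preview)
Your argument is correct and starts from the same key input as the paper---the invariance identities~\eqref{eq:change} applied with $\vt Q=\vt R_\theta$. From that point, however, the paper proceeds by brute force: it writes out the nine scalar equations encoded in $\vt C=\vt R_\theta^\tsp\vt C\vt R_\theta$, solves them under the hypothesis $\theta\neq 0,\pi$ to obtain $\vt C_{12}=\vt C_{21}=\vt C_{13}=\vt C_{31}=0$ and $\vt C_{23}+\vt C_{32}=0$, and then invokes the symmetry of $\vt K$ and $\vt B$ from Theorem~\ref{thm:res} to dispose of those two tensors in one line. Your route is more structural: you reinterpret the identity as commutation with $\vt R_\theta$, use the eigenspace decomposition $\R\vc e_1\oplus\Pi$ to force block-diagonality, and then classify the $2\times 2$ commutants of a nontrivial planar rotation as $\R$-linear combinations of $I$ and $J$. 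The payoff of your approach is that it explains \emph{why} the answer looks the way it does (it is exactly the commutant of a rotation), and it automatically yields the extra constraint $\vt C_{22}=\vt C_{33}$, which the paper's proof also implies but the proposition does not record. The payoff of the paper's approach is that it is entirely self-contained and avoids any appeal to invariant-subspace reasoning. One small remark: your justification that a commuting matrix preserves $\Pi$ is cleanest not via ``orthogonality of $\vt R_\theta$'' but via the observation that $\Pi=\operatorname{Im}(\vt R_\theta-\vt 1)$ (or equivalently the generalised eigenspace for the quadratic factor), which any commuting matrix preserves; your fallback to a direct component check is of course also fine.
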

\begin{proof}
  Let us employ formulae~\eqref{eq:change} with $\vt Q=\vt R_\theta$,
  keeping into account that $\det\vt R_\theta=1$. For the matrix $\vt
  C$ we get the conditions
\begin{align*}
\vt C_{12} & = \vt C_{12}\cos\theta + \vt C_{13}\sin\theta,\\
\vt C_{13} & = \vt C_{13}\cos\theta - \vt C_{12}\sin\theta,\\ 
\vt C_{21} & = \vt C_{21}\cos\theta + \vt C_{31}\sin\theta,\\
\vt C_{22} & = \vt C_{22}\cos^2\theta +(\vt C_{23}+\vt C_{32})\cos\theta\sin\theta
+\vt C_{33} \sin^2\theta,\\
\vt C_{23} & = \vt C_{23}\cos^2\theta +(\vt C_{33}-\vt C_{22})\cos\theta\sin\theta 
-\vt C_{32} \sin^2\theta,\\
\vt C_{31} & = \vt C_{31}\cos\theta -\vt C_{21}\sin\theta,\\
\vt C_{32} & = \vt C_{32}\cos^2\theta + (\vt C_{33}-\vt C_{22}) \cos\theta\sin\theta 
-\vt C_{23}\sin^2\theta,\\
\vt C_{33} & = \vt C_{33}\cos^2\theta - (\vt C_{23}+\vt C_{32}) \cos\theta\sin\theta 
+\vt C_{22}\sin^2\theta,
\end{align*}
which in turn imply that $\vt C_{12}=\vt C_{21}=\vt C_{13}=\vt
C_{31}=0$ and $\vt C_{23}+\vt C_{32}=0$, since $\theta\neq
0,\pi$. Being $\vt K$ and $\vt B$ symmetric, we have the
further conditions $\vt K_{23}=\vt B_{23}=0$ and $\vt K_{22}=\vt
K_{33}$, $\vt B_{22}=\vt B_{33}$.
\end{proof}

\subsection{Helicoidally symmetric bodies with fore-aft symmetry}

A remarkable situation is the case of a homogeneous one-dimensional
helicoidally symmetric body with {\em fore-aft symmetry}, that is, a
body which is both helicoidally symmetric around an axis, and has a
plane of symmetry orthogonal to that axis. A simple example is given
by a body composed of the 12 edges of a regular octahedron.

Without loss of generality, let us assume that a one-dimensional
body $\bs$ is helicoidally symmetric around $x_1$ and has $x_2x_3$ as
a plane of symmetry. Since the coupling tensor $\vt C$ has to satisfy
both Proposition~\ref{prop:planesym} and Proposition~\ref{prop:heli} at the same time,
it follows that
\[
\vt C = \vt 0.
\]
Assuming that $\Sigma$ be homogeneous, so that $\vc r=\vc 0$, the
system~\eqref{eq:finsys} merely becomes
\[
\begin{cases}
\vc\xi =m_e\vt K^{-1}\vc g,\\
\vc 0=\lambda\vc g,
\end{cases}
\]
hence $\lambda = 0$ for any direction $\vc g$. Then $\vc\omega=\vc 0$
and for any given orientation the body falls with a purely
translational velocity given by $\vc\xi =m_e\vt K^{-1}\vc g$.

\appendix

\section{Green's function for Stokes flow}
\label{app}

The basic tool used to construct solutions to the Stokes problem is
the so-called Stokeslet, that is the Green's function for the Stokes
operator in $\R^3$. In this Appendix\footnote{The results of the
  Appendix are based on~\cite{GiuFri12}.}  we want to compute the
expression of the Stokeslet in the case of our hyperviscous fluid,
identified by the operator
\[
\mathcal A:=\ell^2\lap\lap -\lap.
\]
We first need a Green's function $g$ solution of the fourth-order
elliptic equation
\[
\ell^2\lap\lap g - \lap g=\delta(\vc x-\vc x').
\]
Using the Fourier transform, we easily obtain
\[
g(\vc x-\vc x')=\frac{1}{(2\pi)^3}\int_{\R^3} \frac{e^{i\vc k\cdot(\vc x-\vc x')}}{ |\vc k|^2(\ell^2|\vc k|^2+1)}d\vc k.
\]
We choose a basis for the momentum space in such a way that $\vc x-\vc x'$ is along the $k_3$-direction, set $R=|\vc x-\vc x'|$, switch to polar coordinates $(k,\theta,\phi)$, and use the calculus of residues to obtain
\begin{align*}
g(\vc x-\vc x')&\mbox{}=\frac{2\pi}{(2\pi)^3\ell^2}\int_0^{+\infty}\int_{-1}^1\frac{e^{ikR\cos\theta}}{k^2+1/\ell^2}d(\cos\theta)dk\\
&\mbox{}=\frac{2}{(2\pi)^2\ell^2 R}\int_0^{+\infty}\frac{\sin{kR}}{k(k^2+1/\ell^2)}dk\\
&\mbox{}=\frac{1}{(2\pi)^2\ell^2
  R}\imp\left[\int_{-\infty}^{+\infty}\frac{e^{ikR}}{k(k^2+1/\ell^2)}dk\right]\\
&\mbox{}=\frac{1}{(2\pi)^2\ell^2 R}\left(\pi \ell^2-\pi \ell^2e^{-\frac{R}{\ell}}\right).
\end{align*}
Hence, the Green's function is
\begin{equation}\label{eq:g}
g(\vc x-\vc x')=\frac{1}{4\pi |\vc x-\vc x'|}\left[1-\exp\left(-\frac{|\vc x-\vc x'|}{\ell}\right)\right].
\end{equation}
Notice that, in the limit $\ell\to 0$, \eqref{eq:g} reduces to the fundamental solution
\begin{equation}\label{eq:g1}
g_1(\vc x-\vc x')=\frac{1}{4\pi |\vc x-\vc x'|}
\end{equation}
for the Laplace operator. Moreover, $g$ is well defined for any $\vc
x\in\R^3$, at variance with the classical expression $g_1$, which is
singular at the origin.

We now proceed to construct the {\em hyperviscous Stokeslet}, that is a
pressure field $\sletp$ and a velocity field $\slet$ satisfying
\begin{equation}\label{eq:div0}
\dvg\slet=0,
\end{equation}
\begin{equation}\label{eq:stokes}
\nabla\sletp+\mathcal A\slet =\vc h\delta(\vc x),
\end{equation}
with $\vc h\in\R^3$ and $\mathcal A=\ell^2\lap\lap - \lap$.
Let $\phi$ satisfy $\mathcal A\phi=\delta(\vc x-\vc x')$; then, since
$\mathcal A$ commutes with $\nabla$, a solution
for~\eqref{eq:div0}--\eqref{eq:stokes} is given by
\[
\sletp=-\mathcal A\vartheta,\quad
\slet=\vc h\phi+\nabla\vartheta.
\]
The scalar field $\vartheta$ entering this solution is chosen to satisfy the constraint~\eqref{eq:div0} and turns out to have the explicit form
\[
\vartheta=(-\lap)^{-1}(\vc h\cdot\nabla \phi)=g_1*(\vc h\cdot\nabla\phi),
\]
where $g_1$ is as defined in \eqref{eq:g1} and $*$ denotes the usual
convolution product. Now, exploiting the properties of the
convolution and the operator $\mathcal A$, and using the Green's
function $g$ given by equation~\eqref{eq:g}, we find that
\[
-(g_1*\mathcal A(\vc h\cdot\nabla g))=-(g_1*\dvg(\mathcal A(g\vc h)))=-\dvg(g_1*\mathcal A(g\vc h))=-\vc h\cdot\nabla g_1
\]
and, denoting by $\ft{f}$ the Fourier transform of the function $f$,
\begin{align*}
g_1*(\vc h\cdot\nabla g)&\mbox{}=\frac{1}{(2\pi)^3}\int i(\vc
h\cdot\vc k)\ft{g_1}\ft{g}e^{i\vc k\cdot\vc x}d\vc k\\
&\mbox{}
=\frac{1}{(2\pi)^3}\int\frac{i(\vc h\cdot\vc k)e^{i\vc k\cdot\vc x}}{ |\vc k|^4(\ell^2|\vc k|^2+1)}d\vc k\\
&\mbox{}
=\frac{\vc h\cdot\vc x}{4\pi^2\ell^2|\vc x|}\int_0^{+\infty}\int_{-1}^{1}\frac{i\cos\theta e^{ik|\vc x|\cos\theta}}{k(k^2+1/\ell^2)}d(\cos\theta)dk\\
&\mbox{}=\frac{-\vc h\cdot\vc x}{4\pi^2\ell^2|\vc x|}\int_{-1}^{1}\tau\int_0^{+\infty}\frac{\sin(k|\vc x|\tau)}{k(k^2+1/\ell^2)} dk d\tau\\
&\mbox{}
=\frac{-\vc h\cdot\vc x}{8\pi |\vc x|}\int_{-1}^{1}|\tau|\left(1-e^{-\frac{|\vc x|}{\ell}|\tau|} \right)d\tau\\
&\mbox{}=-\frac{\vc h\cdot\vc x}{8\pi |\vc x|}\left[1+\frac{2\ell}{|\vc x|}e^{-\frac{|\vc x|}{\ell}}+\frac{2\ell^2}{|\vc x|^2}\left(e^{-\frac{|\vc x|}{\ell}}-1\right)\right]
.
\end{align*}
Hence the Stokeslet is given by
\begin{equation*}
\sletp(\vc x)=\frac{\vc h\cdot\vc x}{4\pi|\vc x|^3},
\end{equation*}
\begin{multline*}
\slet(\vc x)=\frac{\vc h}{8\pi |\vc x|}\left[1-2e^{-\frac{|\vc x|}{\ell}}-\frac{2\ell}{|\vc x|}e^{-\frac{|\vc x|}{\ell}}-\frac{2\ell^2}{|\vc x|^2}\left(e^{-\frac{|\vc x|}{\ell}}-1\right)\right]
\\
+\frac{(\vc h\cdot\vc x)\vc x}{8\pi |\vc x|^3}\left[1+2e^{-\frac{|\vc x|}{\ell}}+\frac{6\ell}{|\vc x|}e^{-\frac{|\vc x|}{\ell}}+\frac{6\ell^2}{|\vc x|^2}\left(e^{-\frac{|\vc x|}{\ell}}-1\right)\right]
.
\end{multline*}
We also define the hyperviscous Oseen tensor $\tslet$ as, using
Cartesian components,
\begin{multline*}
\tsletc_{ij}(\vc x):=\frac{\delta_{ij}}{8\pi |\vc x|}\left[1-2e^{-\frac{|\vc x|}{\ell}}-\frac{2\ell}{|\vc x|}e^{-\frac{|\vc x|}{\ell}}-\frac{2\ell^2}{|\vc x|^2}\left(e^{-\frac{|\vc x|}{\ell}}-1\right)\right]
\\
+\frac{x_ix_j}{8\pi |\vc x|^3}\left[1+2e^{-\frac{|\vc x|}{\ell}}+\frac{6\ell}{|\vc x|}e^{-\frac{|\vc x|}{\ell}}+\frac{6\ell^2}{|\vc x|^2}\left(e^{-\frac{|\vc x|}{\ell}}-1\right)\right]
,
\end{multline*}
whereby it follows that $\slet(\vc x)=\tslet(\vc x)\vc h$.
The Stokeslet allows us to obtain an integral representation for the
solution of
\begin{equation*}
\nabla p- \lap(\vc u-\ell^2\lap\vc u)=\rho\vc b,
\end{equation*}
with vanishing condition at infinity, in the form of a convolution:
\begin{equation}\label{eq:convol}
\vc u(\vc x):=\rho\int_{\R^3} \tslet(\vc x-\vc x')\vc b(\vc x')d\vc x'.
\end{equation}
In particular, whenever $\vc b$ has compact support, such as in the
case of the gravity force acting on a bounded body $\Sigma$, the
solution $\vc u$ behaves as $1/|\vc x|$ for $|\vc x|\to \infty$.

\section*{Acknowledgments} 
This research is partially supported by GNFM (Gruppo Nazionale per la Fisica Matematica).


\end{document}